\theoremstyle{plain}
\newtheorem{theorem}{Theorem}
\newtheorem{proposition}{Proposition}
\newtheorem{corollary}{Corollary}
\newtheorem{lemma}{Lemma}
\theoremstyle{definition}
\newtheorem{definition}{Definition}
\newtheorem{remark}{Remark}
\newtheorem{example}{Example}
\def\col{\mathcal}
\begin{document}

\title{Unsupervised clustering of file dialects according to monotonic decompositions of mixtures}
%

\author{
    \IEEEauthorblockN{Michael Robinson}
  \IEEEauthorblockA{Department of Mathematics and Statistics\\
    American University\\
    Washington, DC\\
    Email: michaelr@american.edu}\\
\and
      \IEEEauthorblockN{Tate Altman}
  \IEEEauthorblockA{Department of Mathematics and Statistics\\
    American University\\
    Washington, DC\\
    Email: ta8427a@american.edu}\\
\and
  \IEEEauthorblockN{Denley Lam}
  \IEEEauthorblockA{BAE Systems FAST Labs\\
    Arlington, VA\\
    Email: denley.lam@baesystems.com}\\
\and
  \IEEEauthorblockN{Letitia W. Li}
  \IEEEauthorblockA{BAE Systems FAST Labs\\
    Arlington, VA\\
    Email: letitia.li@baesystems.com}\\
}

\makeatletter
\def\ps@headings{%
\def\@oddhead{\mbox{}\scriptsize\rightmark \hfil \thepage}%
\def\@evenhead{\scriptsize\thepage \hfil \leftmark\mbox{}}%
\def\@oddfoot{\scriptsize \@date\hfil Approved for public release; distribution unlimited. Not export controlled per ES-FL-012723-0011.}%
\def\@evenfoot{\scriptsize Approved for public release; distribution unlimited. Not export controlled per ES-FL-012723-0011.\hfil \@date}}

\def\ps@IEEEtitlepagestyle{%
\def\@oddhead{\mbox{}\scriptsize\rightmark \hfil \thepage}%
\def\@evenhead{\scriptsize\thepage \hfil \leftmark\mbox{}}%
\def\@oddfoot{Approved for public release; distribution unlimited. Not export controlled per ES-FL-012723-0011.\hfil}%
\def\@evenfoot{Approved for public release; distribution unlimited. Not export controlled per ES-FL-012723-0011.\hfil}}

\makeatother

\pagestyle{headings}

\maketitle

\begin{abstract}
  This paper proposes an unsupervised classification method that partitions a set of files into non-overlapping dialects based upon their behaviors,
  determined by messages produced by a collection of programs that consume them.
  The pattern of messages can be used as the signature of a particular kind of behavior,
  with the understanding that some messages are likely to co-occur, while others are not.

  Patterns of messages can be used to classify files into dialects.
  A dialect is defined by a subset of messages, called the required messages.
  Once files are conditioned upon dialect and its required messages, the remaining messages are statistically independent.

  With this definition of dialect in hand,
  we present a greedy algorithm that deduces candidate dialects from a dataset consisting of a matrix of file-message data,
  demonstrate its performance on several file formats,
  and prove conditions under which it is optimal.
  We show that an analyst needs to consider fewer dialects than distinct message patterns,
  which reduces their cognitive load when studying a complex format.
\end{abstract}

\IEEEpeerreviewmaketitle

\section{Introduction}

While considerable effort has been expended to formalize what compliance with a format specification means,
the behavior of programs when files are consumed is what defines the end-user experience of a given format.
A behavioral understanding of file formats has the advantage that it is amenable to a statistical perspective,
wherein one can ascribe the conditional probability that a particular file will elicit a particular behavior given that other behaviors have already been observed.

The behavioral perspective aligns neatly with the discipline of test-driven design,
since files that elicit unwanted behaviors can easily be identified as test cases.
As such, curation of format-compliant file datasets is an important task for a file format analyst.
Files that are supposed to comply with a given \emph{ad hoc} format specification may in fact fall into one of several \emph{dialects},
in which different patterns of behavior can be observed.
Managing the behavioral differences between dialects is a source of trouble when one is attempting to construct programs to consume files of a given format.
To obtain adequate test coverage, one must ensure that all dialects are present in the test samples,
which means that an analyst must first know which files in their dataset comply with which dialects.
By partitioning the file format into dialects,
parser developers following the LangSec approach can develop grammars covering each dialect to develop more comprehensive parsers,
or they can formally define which dialects of a format their grammar should cover.

\subsection{Contributions}

This paper proposes an unsupervised classification method that partitions a set of files into non-overlapping dialects based upon their behaviors,
which are measured by the occurrence of a collection of Boolean features, called \emph{messages}.
The pattern of messages can be used as the signature of a particular kind of behavior,
with the understanding that some messages are likely to co-occur, while others are not.

Our method is based upon a novel statistical definition for a behavioral dialect.
A dialect is defined by a subset of messages, called the \emph{required messages}, that satisfies a statistical assumption.
Once files are conditioned upon a dialect and its required messages, the remaining messages are statistically independent.
The implications of this definition are detailed in Section \ref{sec:statistical_model}.

Our definition of dialect leads to a greedy algorithm that deduces candidate dialects from a dataset consisting of a matrix of file-message data.
This algorithm is embodied by the constructive proof of Proposition \ref{prop:decomp} in Section \ref{sec:methodology}.

The method we propose is able to work with files of any format, provided enough messages are available.
To highlight this fact, in Section \ref{sec:experimental} we explore our method's performance on three vastly different formats: a tabular data (CSV), free-form documents (PDF), and images (NITF).

Finally, we establish that our method is optimal in the sense that it yields the least number of extraneous dialects once certain reasonable statistical assumptions are satisfied.
These proofs of optimality appear in Theorems \ref{thm:min_decomp} and \ref{thm:minimal_count} in the Appendix, Section \ref{sec:appendix}.

\subsection{Limitations}

Our methodology relies upon good message coverage for the format under consideration.
If messages are not elicited by the behaviors of interest, then files which exhibit these behaviors cannot be detected.
Fortunately, most available parsers for the \emph{ad hoc} formats we consider in Section \ref{sec:experimental} produce copious output to {\tt stderr} and {\tt stdout}.
This output is sufficiently standardized that regular expressions (regexes) can be used to collect the output into messages.

If the set of messages being used is known to a malicious actor,
our approach might be subverted by crafting files to avoid producing certain messages.
Message regexes can be constructed in a semi-automated way,
outlined in detail by \cite{Robinson_langsec2022},
so it is not difficult to obtain enough messages of sufficient diversity to prevent files from evading proper classification.
Messages can also be made that correspond to system calls, resulting in additional behavioral diversity \cite{Scofield_2017}.
Anecdotally, it seems difficult to construct files that avoid \emph{all} messages of a certain type.

Our statistical model is a special case of a non-parametric independent mixture model.
Independent mixture models are very well studied in the literature,
with many algorithms that have deep theoretical backing (for instance \cite{Marin_2011, McLachlan_2000} among many others).
Unfortunately, these algorithms tend to make assumptions that are inappropriate for the context of file-message data,
such as assuming the components are of a known distribution or that the number of mixture components (dialects) is known from the outset.
Expectation maximization is a common tactic to avoid making these assumptions, at least when the number of features (messages) is small.

Our problem involves hundreds of features (messages),
which makes expectation maximization computationally infeasible \cite[1.3.1]{Marin_2011}.
To manage the complexity, we relax independence into a monotonicity condition.
As a result, instead of obtaining the best mixture decomposition, we merely obtain bounds upon it, see Theorem \ref{thm:minimal_count}.
Because our implementation uses a greedy algorithm,
there is some ambiguity that results in our candidate dialects,
though this is fairly benign.
When several messages are statistically dependent they need not all be chosen as required for a dialect, see Lemma \ref{lem:minimal_count} for details.

\subsection{Related work}

This paper continues a line of work presented over the past few years in the LangSec community that takes a statistical look at format specifications \cite{Robinson_langsec2022,Robinson_langsec2021,Ambrose_2020}.
In contrast to the hypothesis in \cite{Robinson_langsec2022}, that \emph{features are independent when conditioned upon dialect},
the present paper additionally conditions upon a set of messages that are required for each dialect.
When messages from different parsers are combined,
some of these messages are effectively identical features.
For instance, two parsers may emit the same kind of syntax error, resulting in two separate but statistically dependent messages.
This kind of behavior was ignored \cite{Robinson_langsec2022} even though it is a direct violation of the assumption of independence.
By conditioning upon one, the other, or both of these messages, we can restore independence.

Statistical format analysis appears to be a minority viewpoint,
because most file format analysis uses the structure of file \emph{contents} rather than the responses of parsers to those contents
(for instance, see \cite{belaoued2015real,al2018ransomware, 8685181,ALAZAB201591,demme2013feasibility}).
Nevertheless, statistical features based upon file actions has also been used to identify certain malicious behaviors \cite{Scofield_2017}.

We take inspiration from \cite{van2019wrangling}, in which $39$ dialects of CSV files were found.
Our methodology was applied to a simple random sample of the same dataset in Section \ref{sec:experimental},
wherein we find a somewhat coarser collection of $14$ dialects being most common, though numerous less common ones are also present.

From a mathematical perspective,
the methodology we use is based upon \emph{partially ordered sets} and the \emph{Dowker complex} \cite{bjorner1995topological}.
Recent work has connected the Dowker complex to tabular data \cite{Robinson_relations} and to formal concept analysis \cite{brun2023rectangle}.

\section{Statistical model of file format dialects and their behaviors}
\label{sec:statistical_model}

Our data consist of a set of files $F$,
which when parsed by a variety of programs may yield any of a certain set of messages $K$.
For each file, each message $k \in K$ either does occur (in which case we say that $k=1$), or does not occur (expressed as $k=0$).
Although parsing a given file is (usually!) deterministic,
we can model the likelihood of a given message occurring as a probability $P(k=1)$.
A probabilistic model avoids handling specific files individually,
so we rarely need to handle the set $F$ of files directly.

In \cite{Robinson_langsec2022,Robinson_langsec2021}, it was shown that looking at the joint probability of a set of messages either
occurring or not,
$P(k_1=1,k_2=1,\dotsc, k_n=1, k_{n+1}= 0, \dotsc)$,
was a useful way to identify certain files of interest.
In this probabilistic setting,
an event consists of a \emph{message pattern},
which is a subset of messages that might occur.

Studying arbitrary joint probabilities on their own is fraught,
though the data often support useful statistical assumptions that provide theoretical traction.
In \cite{Robinson_langsec2022}, it was assumed that messages for files within a given subset $A \subseteq F$
were independent.
While this is a reasonable assumption when messages are semantically unrelated,
it is not appropriate when some of the messages are related to each other.
In this article, we take a more refined approach.

\begin{definition}
  A \emph{dialect} is a subset of files $A \subseteq F$ and a subset of messages $R_A \subseteq K$,
  such that once conditioned on both $R_A$ and $A$,
  the remaining messages are independent.
  The subset $R_A$ is called the set of \emph{required messages} for dialect $A$.

  Explicitly, if we write $R_A = \{K_{i+1}, \dotsc, K_n\}$,
  the joint probability of a message pattern $K_1 = k_1$, $\dotsc$, $K_n = k_n$ 
  on a file in $A$ is of the form
\begin{equation}
  \label{eq:dialect_def}
    P(k_1,k_2,\dotsc, k_n | A) = \begin{cases}
     0 \text{ if } k_{j} = 0 \text{ and } K_j \in R_A,\\
    P(k_1|A) \dotsb P(k_i|A) P(k_{i+1}=1,\dotsc|A)\\ \text{ otherwise.}\\
    \end{cases}
\end{equation}
\end{definition}

If two messages really mean the same thing, then we may treat one, the other, or both as required.
This implies that dialects are ambiguous in a limited way,
but as we show in Theorems \ref{thm:min_decomp} and \ref{thm:minimal_count} in the Appendix Section \ref{sec:appendix} that this ambiguity is rather benign.
In short, bounds on the number of dialects and the dialects themselves can be obtained algorithmically.

\subsection{Conditionally independent mixtures with required messages}
\label{sec:conditionally_indep}

Consider a set of all messages $K$.
The \emph{power set} $2^K$ of $K$ consists of all subsets of $K$.
The power set is partially ordered by subset $\subseteq$ so that $(2^K,\subseteq)$ is a partially ordered set.
Equivalently, we can think of each subset of $K$ as a binary sequence of length $\#K$.
Equation \eqref{eq:dialect_def} can be thought of as defining a function $P(\cdot|A): 2^K \to [0,1]$ for each dialect $A$.
As an aside, the partial order consisting of those subsets of $K$ that occur for a dataset of files is a sub-order of the face order of the Dowker complex \cite{bjorner1995topological}.

From what was proved in \cite[Lem. 1]{Robinson_langsec2022},
we would then expect that once the required messages occur,
the probabilities defined by Equation \eqref{eq:dialect_def} decrease as more messages occur.
That is to say, the probabilities decrease as more of the $k_1,\dotsc,k_i$ take the value $1$.

\begin{lemma}
  \label{lem:struct}
  Suppose that within a dialect $A$, $P(k_j=1|A) < 1/2$ for every message $K_j$.
  The probability function defined by Equation \eqref{eq:dialect_def} can also be written as
  \begin{equation*}
    P(k_1, \dotsc, k_n | A ) = 1_{U_R} (k_1, \dotsc, k_n) g(k_1, \dotsc, k_n),
  \end{equation*}
  where
  \begin{enumerate}
    \item $R$ is the set of required messages for $A$,
    \item $U_{R} = \{B \in 2^K : R \subseteq B \}$ is the set of all message patterns containing $R$,
    \item $1_{U_R}$ is the indicator function on $U_R$, and
    \item $g: 2^K \to [0,1]$ is monotonic decreasing.
  \end{enumerate}
\end{lemma}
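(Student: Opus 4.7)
The plan is to unpack the piecewise definition in Equation \eqref{eq:dialect_def} and read off the two factors. The first branch, where $k_j = 0$ for some required $K_j \in R_A$, forces the probability to vanish outside $U_R$ and is therefore captured exactly by the factor $1_{U_R}$; all the real work lies in producing a single function $g : 2^K \to [0,1]$ that agrees with the nonzero branch on $U_R$ and is monotone decreasing on all of $2^K$.

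On $U_R$ the right-hand side of Equation \eqref{eq:dialect_def} equals $P(k_1|A)\dotsb P(k_i|A) \cdot c$, where $c := P(K_{i+1}=1,\dotsc,K_n=1 \mid A)$ is constant in $k_1,\dotsc,k_i$. I would therefore define
\[
g(k_1,\dotsc,k_n) := c \cdot \prod_{j=1}^{i} h_j(k_j),
\]
with $h_j(1) := P(K_j=1\mid A)$ and $h_j(0) := 1 - P(K_j=1\mid A)$, where the product runs only over the non-required indices $1,\dotsc,i$. This formula makes sense on all of $2^K$ precisely because it does not involve $k_{i+1},\dotsc,k_n$, and it lies in $[0,1]$ since each factor does. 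Multiplying by $1_{U_R}$ then recovers Equation \eqref{eq:dialect_def} on $U_R$ and gives zero on its complement, matching the first branch.

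Monotonicity then follows by a one-coordinate-at-a-time check. If $B \subsetneq B'$ differ by flipping a single $k_j$ from $0$ to $1$, there are two cases. If $K_j$ is required, $g$ is constant in $k_j$, so $g(B) = g(B')$. If $K_j$ is non-required, the $j$-th factor changes from $1 - P(K_j=1\mid A)$ to $P(K_j=1\mid A)$, which is strictly smaller by the hypothesis $P(K_j=1\mid A) < 1/2$, while all other factors are unchanged; hence $g(B') \leq g(B)$. Chaining this along any cover chain from $B$ to $B'$ in $(2^K,\subseteq)$ gives the required monotonicity.

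I do not expect a genuine obstacle here; the only subtlety is that $g$ must be defined on all of $2^K$, not merely on $U_R$, since $1_{U_R}$ already masks out the complement. Defining $g$ by a formula that simply ignores the required coordinates sidesteps this cleanly and avoids any contrived extension that might accidentally break monotonicity on $2^K \setminus U_R$.
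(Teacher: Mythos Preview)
Your proof is correct and follows essentially the same approach as the paper: isolate the indicator $1_{U_R}$ from the vanishing branch of Equation~\eqref{eq:dialect_def}, then use the hypothesis $P(K_j=1\mid A)<1/2$ to show that flipping a non-required coordinate from $0$ to $1$ only decreases the remaining factor. You are actually more careful than the paper on one point: by defining $g$ via a formula that ignores the required coordinates, you obtain monotonicity on all of $2^K$, whereas the paper's argument only explicitly addresses monotonicity within $U_R$.
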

\begin{proof}
  Equation \eqref{eq:dialect_def} stipulates that if $k_{j} = 0$ for some $j \in \{i+1 \dotsc, n\}$, then $P(k_1, \dotsc, k_n | A )=0$.
  This means that the support of $P(\cdot | A)$ is contained within the support of the indicator function on the set
  \begin{equation*}
    R = \{K_{i+1}, \dotsc, K_n\}
  \end{equation*}
  of required messages.

  On the other hand, since $P(k_j=1|A) < 1/2$ for every message $K_j$, this implies that the probability decreases if we leave out a non-required message
  \begin{equation*}
    \begin{aligned}
      P(k_1, \dotsc, k_n | A ) &= P(k_1|A) \dotsb P(k_i|A) P(k_{i+1}=1,\dotsc|A) \\
      & < P(k_1|A) \dotsb P(k_{j-1}|A) P(k_{j+1}|A) \dotsb\\
      & \quad\quad  P(k_i|A) P(k_{i+1}=1,\dotsc|A) \\
      & < P(k_1, \dotsc, k_{j-1},k_{j+1} \dotsc | A ).
    \end{aligned}
  \end{equation*}
  Said another way, the probability is a monotonic decreasing function within $U_R$.
\end{proof}

If several dialects are present, the probability of message patterns being exhibited has a rather definite form.

\begin{corollary}
  \label{cor:mixture_struct}
  Suppose that within a dialect $A$, $P(k_j=1|A) < 1/2$ for all messages $k_j$.  
  Assuming dialects are disjoint, the joint distribution of messages over all files is then
  \begin{equation*}
    \label{eq:mixture_struct}
    \begin{aligned}
      P(k_1,k_2,\dotsc, k_n) &= \sum_{A} P(k_1,k_2,\dotsc, k_n | A) P(A) \\
      &= \sum_{A} 1_{U_{R_A}}(k_1, \dotsc, k_n) g_A(k_1, \dotsc, k_n),
    \end{aligned}
  \end{equation*}
  where $1_{U_{R_A}}$ and $g_A$ are the functions defined in the statement of Lemma \ref{lem:struct} associated to dialect $A$.
\end{corollary}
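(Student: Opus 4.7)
My plan is to derive the corollary as a direct application of the law of total probability together with Lemma \ref{lem:struct}. Since the dialects $A$ are assumed disjoint (and, implicitly, exhaustive as a decomposition of $F$), the events $\{A\}$ partition the sample space, so the law of total probability immediately gives
\begin{equation*}
P(k_1,\dotsc,k_n) = \sum_A P(k_1,\dotsc,k_n \mid A)\, P(A).
\end{equation*}
This takes care of the first equality in the statement essentially for free.

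For the second equality, I would apply Lemma \ref{lem:struct} to each conditional factor. The hypothesis $P(k_j = 1 \mid A) < 1/2$ for every message and every dialect $A$ is precisely the assumption of Lemma \ref{lem:struct}, so for each $A$ we may write
\begin{equation*}
P(k_1,\dotsc,k_n \mid A) = 1_{U_{R_A}}(k_1,\dotsc,k_n)\, \tilde g_A(k_1,\dotsc,k_n)
\end{equation*}
for some monotonic decreasing $\tilde g_A : 2^K \to [0,1]$. I would then absorb the scalar prior $P(A) \in [0,1]$ into $\tilde g_A$ by defining $g_A := P(A)\,\tilde g_A$. Since multiplication by a nonnegative constant preserves monotonic decreasingness and since $P(A) \le 1$ keeps the range inside $[0,1]$, the function $g_A$ satisfies the same conditions as the one in Lemma \ref{lem:struct}. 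Substituting and summing yields the claimed form.

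The only real subtlety, and the point I would be careful to flag, is the scope of the sum over $A$: the disjointness assumption is used to avoid double-counting in the total probability formula, and one should either assume the dialects cover $F$ or interpret any uncovered mass as a residual dialect with its own $g$. Beyond this bookkeeping the argument is essentially a one-line substitution, so I do not expect a genuine technical obstacle; the content of the corollary lies entirely in Lemma \ref{lem:struct}, with the corollary itself serving to package the mixture structure for later use.
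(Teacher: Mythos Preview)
Your proposal is correct and matches the paper's intent: the corollary is stated without proof precisely because it is the law of total probability followed by a term-by-term application of Lemma~\ref{lem:struct}. Your observation that the prior $P(A)$ must be absorbed into $g_A$ to make the second equality literal (and that this preserves monotonic decreasingness) is a point the paper glosses over, so your write-up is in fact slightly more careful than the original.
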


\begin{proposition}
  \label{prop:minimal_required}
  Suppose that within a dialect $A$, $P(k_j=1|A) < 1/2$ for all messages $k_j$.  
  Under the model given by Equation \eqref{eq:dialect_def},
  the support of a dialect (the set of message patterns where its probability is nonzero)
  has a unique minimal number of messages that occur, namely the required messages.
\end{proposition}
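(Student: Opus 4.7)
The plan is to read the structure of the support directly off Equation \eqref{eq:dialect_def} and combine it with the assumption $P(k_j=1|A) < 1/2$ to show that the unique minimum-size element of the support is exactly $R_A$.

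First, I would establish the lower bound. By the first case of Equation \eqref{eq:dialect_def}, any message pattern $(k_1,\dotsc,k_n)$ with $k_j = 0$ for some $K_j \in R_A$ has conditional probability zero. Hence every pattern in the support of $P(\cdot | A)$ contains $R_A$ as a subset, so at least $|R_A|$ of its coordinates must equal $1$.

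Second, I would show that the pattern $r$ with $k_j = 1$ exactly for $K_j \in R_A$ (and $k_j = 0$ otherwise) actually lies in the support. On this pattern the ``otherwise'' branch of Equation \eqref{eq:dialect_def} applies and evaluates to
\begin{equation*}
P(r \mid A) = P(k_1=0|A)\dotsb P(k_i=0|A)\, P(k_{i+1}=1,\dotsc,k_n=1|A).
\end{equation*}
Each factor $P(k_j=0|A)$ with $K_j \notin R_A$ is strictly positive because $P(k_j=1|A) < 1/2$ forces $P(k_j=0|A) > 1/2 > 0$. The remaining factor $P(k_{i+1}=1,\dotsc,k_n=1|A)$ must also be strictly positive: otherwise every pattern in the support would be forced to zero (since every such pattern has all required messages equal to $1$, and the ``otherwise'' expression factors through this term), contradicting that $P(\cdot|A)$ is a probability distribution summing to $1$.

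Finally, uniqueness is immediate: any other support pattern must contain $R_A$ by Step 1, so if it has only $|R_A|$ coordinates equal to $1$ it must equal $r$ itself. The main subtlety is the justification that $P(k_{i+1}=1,\dotsc,k_n=1|A) > 0$ in Step 2; this is the only place one needs a nontrivial observation beyond directly unpacking Equation \eqref{eq:dialect_def}, and it is handled by the normalization argument above.
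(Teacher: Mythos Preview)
Your argument is correct and, in fact, slightly more direct than the paper's. The paper first invokes Lemma~\ref{lem:struct} to write $P(\cdot\mid A)=1_{U_R}\cdot g$ with $g$ monotonic decreasing, and then argues by contradiction: if $P(R\mid A)=0$ while some superset $S\supset R$ has $P(S\mid A)>0$, then $g(S)>g(R)$, violating monotonicity. You instead compute $P(R\mid A)$ explicitly from the factorisation in Equation~\eqref{eq:dialect_def} and observe that each non-required factor $P(k_j=0\mid A)$ is positive, while the required-message factor $P(k_{i+1}=1,\dotsc,k_n=1\mid A)$ must be positive by normalisation. The difference is that the paper routes the positivity of $P(R\mid A)$ through the monotonicity of $g$ (hence through Lemma~\ref{lem:struct}), whereas you get it from the total-probability constraint; your route avoids appealing to Lemma~\ref{lem:struct} altogether and in fact only uses the hypothesis $P(k_j=1\mid A)<1/2$ in the weak form $P(k_j=1\mid A)<1$, so it establishes the result under a milder assumption than the paper's proof actually needs.
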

\begin{proof}
  According to Lemma \ref{lem:struct}, for a dialect $A$,
  \begin{equation*}
    P(k_1, \dotsc, k_n | A ) = 1_{U_R} (k_1, \dotsc, k_n) g(k_1, \dotsc, k_n)
  \end{equation*}
  for a monotonic decreasing $g$.
  This ensures that the support of the dialect is contained within $U_R$. 

  By way of contradiction, suppose that there were two minimal sets of messages that occur.
  This is equivalent to saying that there are at least two proper subsets $S_1 \subset R$ and $S_2 \subset R$ of the required messages $R$ for which $P(S_1|A)$ and $P(S_2|A)$ are both nonzero,
  yet $P(R|A)=0$.
  Notice that by construction,
  \begin{equation*}
    P(S_1|A) = g(S_1),
  \end{equation*}
  and
  \begin{equation*}
    P(R|A) = g(R).
  \end{equation*}
  We have just shown that $g(S_1) > g(R)$,
  yet this violates monotonicity and so is a contradiction.
\end{proof}

\section{Experimental results}
\label{sec:experimental}

Proposition \ref{prop:minimal_required} yields a decomposition of the joint message probability into dialects specified by minimal sets of required messages.
Assuming that these decompositions can be obtained---Section \ref{sec:methodology} describes an algorithm for constructing them---this
section discusses how these decompositions can partition a given set of files into semantically useful dialects for further exploration by other means.

\subsection{CSV}

The humble comma separated value (CSV) file format appears at first glance to be completely defined by its name.
It is a text file format for specifying tabular data, consisting of cells grouped into rows and columns.
Each row corresponds to a line in the file, delimited by one of a handful of line ending characters.
Each column is delimited by a comma character.
This simple characterization quickly goes awry as what constitutes a ``comma'' varies with language and text file encoding \cite{van2019wrangling}.
Moreover, since cells might contain delimiters for line endings or commas, some kind of quoting is required.
Again, quote characters vary with encoding.
Finally, because CSV files are often consumed by spreadsheet applications,
they can contain formulae or fragments of executable code that can interact in surprising ways \cite{poc19}.

To demonstrate the dialects that are present within a corpus of CSV files,
we drew a simple random sample of $3005$ files from the dataset described in \cite{van2019wrangling}.
To each of these files, we extracted a total of $33$ messages obtained by the {\tt CleverCSV} tool described in the same article.
In total, the messages consist of
\begin{itemize}
\item 14 delimiters,
\item 3 quote characters,
\item 3 escape characters, and
\item 13 distinct text encodings.
\end{itemize}
Figure \ref{fig:csv_matrix} shows a representation of the data obtained by this process.
Each row corresponds to a distinct message, and each column corresponds to a distinct file.
There are $29$ rows in Figure \ref{fig:csv_matrix} because $4$ messages ($1$ quote character, $2$ escape characters, and $1$ delimiter) did not occur on any file in our sample.

\begin{figure}
  \begin{center}
    \includegraphics[width=3in]{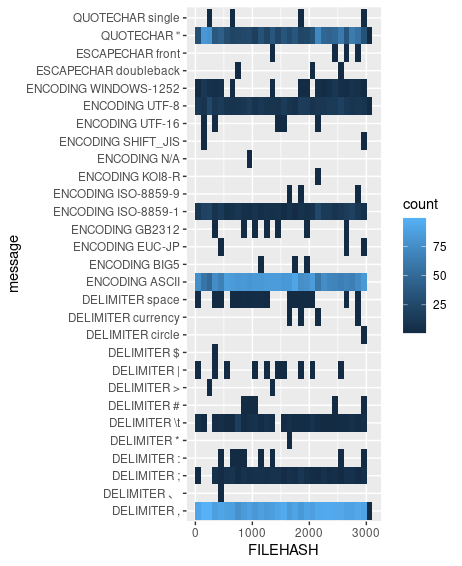}
    \caption{Summary of our CSV data: Rows correspond to possible messages, while columns correspond to files within our sample.  A gray cell indicates that the message did not occur for the corresponding file, whereas a shaded cell indicates that it occurred.  Due to limitations of horizontal resolution, files are aggregated into bins in which counts are displayed.}
    \label{fig:csv_matrix}
  \end{center}
\end{figure}

The horizontal stripes in Figure \ref{fig:csv_matrix} indicate that there are some messages that are very frequent (ASCII encoding and ASCII comma, for instance).
Table \ref{tab:csv_dowker} records the number of files exhibiting each message pattern.
Only message patterns for which $5$ or more files were present are shown.
Most CSV files (about $89\%$) correspond to text files delimited with commas, which is hopefully not too surprising.
ASCII text files delimited by ASCII commas correspond to about $70\%$ of the total sample.
About half of these ASCII files contain ASCII {\tt "} as a quotation character.
It is worth noting that the ASCII TAB character is often used as a field delimiter in Microsoft Excel.

\begin{table}
    \begin{center}
      \caption{File counts for the most common message patterns in the CSV data} 
      \label{tab:csv_dowker}
\begin{tabular}{|r|l|}
  \hline
  File count & Message pattern\\
  \hline
1417&	{\tt , ASCII}\\
682&	{\tt , " ASCII}\\
196&	{\tt , " UTF-8}\\
156&	{\tt , " ISO-8859-1}\\
119&	{\tt , UTF-8}\\
64&	{\tt , " WINDOWS-1252}\\
55&	{\tt ; ISO-8859-1}\\
51&	{\tt TAB ASCII}\\
48&	{\tt TAB ISO-8859-1}\\
45&	{\tt ; ASCII}\\
29&	{\tt ASCII}\\
18&	{\tt space ASCII}\\
18&	{\tt , ISO-8859-1}\\
11&	{\tt ; " ASCII}\\
10&	{\tt ; " UTF-8}\\
8&	{\tt | ASCII}\\
8&	{\tt , GB2312}\\
8&	{\tt ; " ISO-8859-1}\\
7&	{\tt : ASCII}\\
6&	{\tt , WINDOWS-1252}\\
\hline
\end{tabular}
    \end{center}
\end{table}

The perspective arising from Table \ref{tab:csv_dowker} is useful but does not do well for finding unusual dialects.
As described in \cite{Robinson_langsec2022},
a different summary of the file-message data is a partially ordered set in which each node corresponds to a message pattern,
and the order relation connects pairs of message patterns obtained by adding additional messages.
Figure \ref{fig:csv_dowker} shows the Hasse diagram of this partial order for our data.
In the figure, the size of each node is determined by the number of files exhibiting its corresponding message pattern.
Because the number of files in each message pattern varies tremendously, the sizes are scaled logarithmically.

It is immediately apparent that the graph consists of many disconnected components,
each of which correspond to at least one dialect.
Many of these components correspond to different text encodings.
As should be expected,
the ASCII-encoded files form the largest component.

\begin{figure*}
  \begin{center}
    \includegraphics[width=6in]{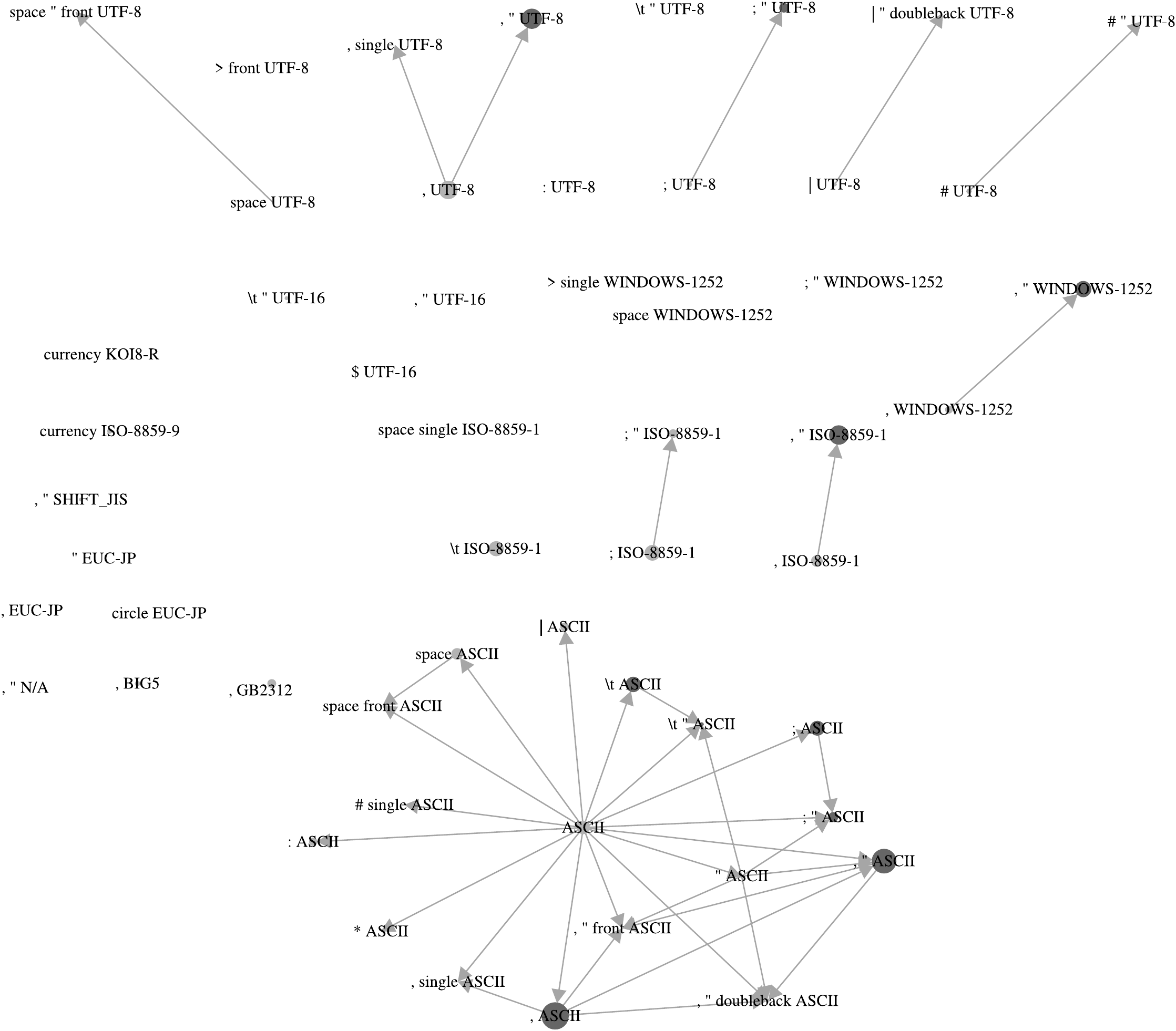}
    \caption{Partial order of message patterns.  Only those message patterns present in the data are shown.  The size of each node is logarithmic in the number of files exhibiting the corresponding message pattern.}
    \label{fig:csv_dowker}
  \end{center}
\end{figure*}

The statistical model described in Section \ref{sec:statistical_model} suggests that relationships between message patterns might account for their prevalence.
For instance, consider the message pattern {\tt TAB ASCII}.
Given that $51$ of these files occur, it should not be surprising that there are also some files that exhibit these two messages along with a quotation or escape character.
Moreover, the statistical model suggests if these other files are not too common, then we should consider them to be part of the same dialect.

There is some ambiguity in how the dialects are formed,
though Theorem \ref{thm:minimal_count} in Section \ref{sec:refinement} establishes that there is a unique set of dialects that are the largest.
Taking that as a given,
the resulting dialects we obtain are shown in Table \ref{tab:csv_dialect_output}.

\begin{table}
    \begin{center}
      \caption{Largest CSV candidate dialects, in order of discovery}
      \label{tab:csv_dialect_output}
\begin{tabular}{|r|l|}  
  \hline
  File count &  Required messages\\
  at root &\\
\hline
       1388 &{\tt , ASCII}\\
        119 &{\tt , UTF-8}\\         
         77 &{\tt , " UTF-8}\\
         18 &{\tt , ISO-8859-1}\\     
          7 &{\tt ; " UTF-8}\\       
         22 &{\tt TAB ASCII}\\        
        138 &{\tt , " ISO-8859-1}\\  
         48 &{\tt TAB ISO-8859-1}\\   
         58 &{\tt , " WINDOWS-1252}\\
         29 &{\tt ASCII}\\            
         55 &{\tt ; ISO-8859-1}\\     
         16 &{\tt ; ASCII}\\          
          8 &{\tt , GB2312}\\         
          6 &{\tt , WINDOWS-1252}\\
  \hline
\end{tabular}
    \end{center}
\end{table}

The ``File count at root'' column reports the number of files exhibiting the required messages \emph{and no others}.
According to Lemma \ref{lem:struct}, the number of files in the dialect associated to any other message pattern will not exceed this value.
In terms of file count, the top entries in Table \ref{tab:csv_dialect_output} are the comma-delimited text files with various text encodings.
The number of dialects with at least $5$ files is $14$, which is smaller than the $20$ message patterns with at least $5$ files,
so some consolidation of the data has occurred.
The {\tt , " ASCII} message pattern has been subsumed into {\tt , ASCII}, for instance.
The numbers of files in each have decreased somewhat because the statistical model expects a few files exhibiting message pattern {\tt , ASCII} to arise as part of the {\tt ASCII} dialect.
In short, the model expects that there are probably a few files that exhibited message pattern {\tt , ASCII} that are not actually CSV files---tabular data---but
are instead unstructured ASCII text files with commas.

\subsection{PDF}

The Portable Document File (PDF) format is defined by the ISO 32000-2 standard.
For this exercise, we used a sample of $10000$ files curated by the Test and Evaluation Team for the DARPA SafeDocs evaluation exercise 4.

Each file was processed through $11$ distinct parsers, run with various options.
A total of $3004$ Boolean messages were collected, as shown in Table \ref{tab:pdf_parsers}.
One message per parser is an exit code corresponding to the presence of an error.
The rest of the messages correspond to specific regular expressions (regexes) run against {\tt stderr} and {\tt stdout}, as explained previously in \cite{Robinson_langsec2022}.
Several of these messages were found to play an important role in identifying dialects, and appear in Table \ref{tab:pdf_messages}.

\begin{table}
  \begin{center}
    \caption{Messages associated to each PDF parser}
    \label{tab:pdf_parsers}
    \begin{tabular}{|l|r|}
      \hline
      Parser & Message count\\
      \hline
{\tt caradoc}&       253\\
{\tt hammer}&        65\\
{\tt mutool}&      796\\
{\tt origami}&      39\\
{\tt pdfium}&     21\\
{\tt pdfminer}&      62\\
{\tt pdftk}&       29\\
{\tt pdftools}&       10\\
{\tt poppler}&       995\\
{\tt qpdf}&          136\\
{\tt xpdf}&          598\\
\hline
Total & 3004\\
      \hline
    \end{tabular}
  \end{center}
\end{table}

\begin{table*}
    \begin{center}
      \caption{Sample messages in our PDF data relevant for candidate dialects}
      \label{tab:pdf_messages}
\begin{tabular}{|c|c|l|}
  \hline
  Message & parser & regex\\
  \hline
  69      &caradoc     &{\tt PDF error : Syntax error at offset \textbackslash d+ \textbackslash [0x[A-Fa-f\textbackslash d]+\textbackslash ] in file !}\\
  163     &caradoc     &{\tt PDF error : Syntax error at offset .* in file !}\\
  217     &caradoc     &{\tt PDF error : Lexing error : unexpected character : 0x[A-Fa-f\textbackslash d]+ at offset}...\\
  220     &caradoc     &{\tt PDF error : Lexing error : unexpected word at offset \textbackslash d+ \textbackslash [0x[A-Fa-f\textbackslash d]+\textbackslash }...\\
    250     &caradoc     &{\tt Warning : Flate\textbackslash/Zlib stream with appended newline in object .*}\\
   96,188,251     &caradoc     & Exit code meaning error \\
   \hline
 255     &hammer      &{\tt .*: no parse}\\
 258     &hammer      &{\tt (?:/[a-zA-Z\textbackslash d \textbackslash -]+)+/[A-Fa-f\textbackslash d]+: error after position \textbackslash d+ \textbackslash (0x[A-Fa-f\textbackslash d}...\\
 297     &hammer      &{\tt VIOLATION } ... {\tt No newline before 'endstream' }...\\
 308     &hammer      &{\tt VIOLATION } ... {\tt Missing endobj token \textbackslash (seve}...\\
 313     &hammer      &{\tt VIOLATION } ... {\tt No linefeed after 'stream' \textbackslash}...\\
 314     &hammer      &{\tt VIOLATION } ... {\tt Nonconformant WS at end of x}...\\
 316     &hammer      &Exit code meaning error \\
 \hline
 482     &mutool      &{\tt warning: line feed missing after stream begin marker \textbackslash (\textbackslash d+ \textbackslash d+ R\textbackslash )}\\
 720     &mutool      &{\tt warning: line feed missing after stream begin marker \textbackslash (\textbackslash d+ \textbackslash d+ R\textbackslash )}\\
 899     &mutool      &{\tt page (?:/[a-zA-Z\textbackslash d]+)+/[A-Fa-f\textbackslash ]+ \textbackslash d +}\\
 978     &mutool      &{\tt warning: line feed missing after stream begin marker \textbackslash (\textbackslash d+ \textbackslash d+ R\textbackslash )}\\
\hline
 1143    &origami     &{\tt .*Object shall end with 'endobj' statement.*}\\
 1153    &origami     & Exit code meaning error \\
 \hline
 2346    &qpdf        &{\tt WARNING: .*: expected endobj}\\
 2384    &qpdf        &{\tt WARNING: .*: stream keyword followed by carriage return only}\\
 \hline
 2889    &xpdf        &{\tt Syntax Warning.*: Substituting font '.*' for '.*'}\\
 3015    &xpdf        &{\tt non\_embedded\_font}\\
  \hline
\end{tabular}
    \end{center}
\end{table*}

After processing, we found $1658$ distinct message patterns, of which those with file count of at least $100$ are shown in Table \ref{tab:pdf_dowker}.
It should be noted that valid files often still produce numerous warnings and other output,
and so do not correspond to a small number of message patterns.

\begin{table}
    \begin{center}
      \caption{File counts for the most common message patterns in the PDF data} 
      \label{tab:pdf_dowker}
\begin{tabular}{|r|l|}
  \hline
  File count & Message pattern\\
  \hline
   3761& 250 251 899 1153                            \\                                                
    347& 251 297 899 1153                            \\                                                
    319& 250 251 899 1153 2888                       \\                                                
    186& 217 251 899 1153                            \\                                                
    117& 92 96 184 188 247 251 899 1153              \\                                                
    116& 200 250 251 899 1153                         \\                                               
    114& 69 96 163 188 220 251 255 258 297 308 313 ... \\
    107& 234 251 899 1153                  \\                                                          
    102& 228 251 899 1153                \\                                                            
    101& 7 96 104 188 217 251 899 1153 \\
\hline
\end{tabular}
    \end{center}
\end{table}

\begin{table}
    \begin{center}
      \caption{Largest PDF candidate dialects}
      \label{tab:pdf_dialect_output}
\begin{tabular}{|r|l|l|}  
  \hline
  File count &  Required messages & Interpretation\\
  at root &&\\
  \hline
   3684& 250 251 899 1153 & Compressed stream error\\
    270& 251 297 899 1153 & Missing/misplaced\\ && {\tt endstream} delimiter\\
    111& 69 96 163 188 220 251 255& Syntax error\\
    &258 297 308 313 ... & \\
    109& 217 251 899 1153 & Syntax error \\
  \hline
\end{tabular}
    \end{center}
\end{table}

Our proposed dialect decomposition algorithm compresses the message patterns into $4$ dialects with root file count greater than $100$,
as shown in Table \ref{tab:pdf_dialect_output}.
These four dialects have a fairly clear interpretation, as shown in the last column.
The latter two dialects appear to correspond to different kinds of syntax errors.

As a comparison, if we consider file counts of at least $25$, we found $10$ dialects with root file count greater than $25$ compared to $43$ message patterns.
In both cases, a format analyst need only consider about one-quarter as many dialects as overall message patterns.

\subsection{NITF}

The National Imagery Transmission Format (NITF) is used by many US Government entities to share geocoded imagery.
These files contain a complicated header and a data payload.
There are several variants of NITF in practice,
which has lead to some format divergence.

As part of the DARPA SafeDocs hackathon exercise 5,
the Test and Evaluation Team provided a set of $2626$ NITF files.
Against each of these files, $6$ parsers were run.
The output of {\tt stderr}, {\tt stdout} were collected along with the parser's return code.
Using the same process described in \cite{Robinson_langsec2022} for PDF, suitably modified for NITF,
a collection of regexes were run on this output to produce $136$ possible messages.
The breakdown of parsers and messages is shown in Figure \ref{tab:bae_nitf_parsers}.

\begin{table}
    \begin{center}
      \caption{Parsers used to process the NITF data}
      \label{tab:bae_nitf_parsers}
\begin{tabular}{|r|l|}
  \hline
  Parser     &     Message count\\
\hline
 {\tt afrl}    &  42\\
 {\tt codice}     &  28\\
 {\tt gdal}      &  36\\
 {\tt hammer\_nitf}  &  15\\
 {\tt kaitai}      &    6\\
 {\tt nitro}       &    9\\
 \hline
 Total & 136\\
 \hline
\end{tabular}
    \end{center}
\end{table}

The most common messages are shown in Table \ref{tab:bae_nitf_messages}.
Several regular expressions matched more than 50\% of the time.
Since this violates the assumptions on message probability in Section \ref{sec:conditionally_indep},
it is more informative to use the absence of a match instead.

\begin{table*}
    \begin{center}
      \caption{Most common messages in the NITF data}
      \label{tab:bae_nitf_messages}
\begin{tabular}{|c|r|c|l|}
  \hline
  Message & File count & parser & regex\\
  \hline
59 & 1051 & {\tt codice}       & Absence of {\tt Parse error\textbackslash n }\\
102& 1039 & {\tt gdal}         & Absence of {\tt gdalinfo failed \textbackslash - unable to open '.*'\textbackslash . }\\
107& 1038 & {\tt hammer\_nitf} & Absence of errors in exit code\\
71 & 1029 & {\tt gdal}         & Absence of errors in exit code\\
1  &  825 & {\tt afrl}         & Absence of errors in exit code\\
37 &  812 & {\tt afrl}         & {\tt Error reading, read returned .*\textbackslash . \textbackslash (start = .*,} ...\\
94 &  527 & {\tt gdal}         & {\tt ERROR \textbackslash d+: Not enough bytes to read segment info }\\
108&  470 & {\tt hammer\_nitf} & {\tt /[a-zA-Z\textbackslash d \_\textbackslash \textbackslash .\textbackslash -\textbackslash (\textbackslash ):/,+]+\textbackslash .[a-zA-Z\textbackslash d]+: no parse }\\
113&  420 & {\tt hammer\_nitf} & {\tt VIOLATION } ... {\tt Invalid file length in header \textbackslash (severity=\textbackslash d+\textbackslash ) }\\
21 &  394 & {\tt afrl}         & {\tt Error reading header.* }\\
12 &  308 & {\tt afrl}         & {\tt user defined data length = \textbackslash d+ }\\
103&  241 & {\tt gdal}         & {\tt gdal ERROR .*: NITF Header Length \textbackslash (.*\textbackslash ) seems}...\\
119&  241 & {\tt hammer\_nitf} & {\tt VIOLATION } ... {\tt Invalid number of graph segments \textbackslash (severity=\textbackslash d+\textbackslash ) }\\
99 &  227 & {\tt gdal}         & {\tt Warning \textbackslash d+:} ... {\tt appears to be an NITF file, but no image } ... \\
  \hline
\end{tabular}
    \end{center}
    \end{table*}

In our data, $103$ of the messages are extant.
There are $20$ distinct message patterns with file count at least $25$,
which are shown in Table \ref{tab:bae_nitf_dowker}.

\begin{table}
    \begin{center}
      \caption{File counts for the most common message patterns in the NITF data} 
      \label{tab:bae_nitf_dowker}
\begin{tabular}{|r|l|}
  \hline
  File count & Message pattern\\
  \hline
357	&1 59 71 102 107\\
151	&1 12 59 71 102 107\\
100	&1 59 71 99 102 107 122\\
70	&14 23 94\\
70	&94\\
66	&21 37 81 113\\
59	&1 59 71 99 102 107\\
56	&103 113\\
50	&15 37 86\\
48	&21 37 94 113\\
47	&33 37 103 113\\
44	&22 37 76 108 119\\
43	&94 111\\
42	&21 37 103 113\\
30	&21 37 94 119\\
29	&17 59 71 102 107\\
29	&21 37 76 113\\
29	&7 59 71 102 107\\
28	&12 34 37 40 94\\
26	&2 12 59 79 82 107\\
  \hline
\end{tabular}
    \end{center}
\end{table}

The decomposition of the data into dialects is summarized in Table \ref{tab:bae_nitf_dialect_output}.
The ``File count at root'' column reports the number of files exhibiting the required messages \emph{and no others}.
According to Lemma \ref{lem:struct}, the number of files in the dialect associated to any other message pattern will not exceed this value.

There are $12$ dialects with required messages having a file count at least $25$, as shown in Table \ref{tab:bae_nitf_dialect_output}.
This is an improvement over the $20$ message patterns that Table \ref{tab:bae_nitf_dowker} presents.
Referring back to Table \ref{tab:bae_nitf_messages},
all but the two most common dialects shown contains some kind of parser \emph{error} message.

\begin{table}
    \begin{center}
      \caption{Largest NITF candidate dialects}
      \label{tab:bae_nitf_dialect_output}
\begin{tabular}{|r|l|l|}  
  \hline
  File count &  Required messages&Interpretation\\
  at root && \\
  \hline
352	&1 59 71 102 107 & Valid files\\
93	&1 59 71 99 102 107 122 & Corrupted data payload\\
70	&94& Read access error\\
60	&14 23 94&Read access error\\
54	&103 113&Corrupted header length\\
49	&15 37 86&Read access error\\
43	&21 37 81 113&Corrupted header length\\
41	&21 37 94 113&Corrupted header length\\
27	&22 37 76 108 119&Read access error\\
26	&21 37 94 119&Corrupted header\\
26	&2 12 59 79 82 107&Valid but unsupported version\\
25	&21 37 76 113&Corrupted header length\\
  \hline
\end{tabular}
    \end{center}
\end{table}

While not every possible message corresponds to violations of length fields,
most of the dialects shown in Table \ref{tab:bae_nitf_dialect_output} correspond to a corrupted length field within the NITF header.
Because the NITF specification permits random access to the data payload,
length field corruption explains the presence of messages $37$ and $94$,
which indicate reading beyond the end of the file.
What ultimately distinguishes the dialects is which length fields were corrupted.
Due to differences in how the parsers operate, different fields are collected at different points in the parse by different parsers.

Further examination of the files in the second-to-last dialect in Table \ref{tab:bae_nitf_dialect_output}, the one that contains message $2$,
reveals that these files were for a version not supported by the {\tt afrl} parser.

\section{Detailed methodology}
\label{sec:methodology}

Our goal is to find candidate dialects from the probabilities $P(k_1,k_2,\dotsc, k_n)$ of each message pattern.
We show in Proposition \ref{prop:decomp} that there are decompositions of these data into a mixture of disjoint dialects of the form postulated in Corollary \ref{cor:mixture_struct}.
Furthermore, there is a greedy algorithm for finding such a decomposition.

Although our probabilistic model is posed over the power set of messages partially ordered by subset $(2^K,\subseteq)$,
We will instead establish results for an arbitrary partially ordered set $P$.
This can yield a substantial savings in memory usage and runtime of any algorithms working on the data,
because we need only consider those message patterns that are actually present in the data.
Moreover, at the level of generality used, our data could be formatted as probabilities taking values between $0$ and $1$ or as counts of files.

This section establishes an algorithm that constructs candidate dialect decompositions from the data.
As suggested in Section \ref{sec:conditionally_indep},
the data are formatted as a function $f: P \to [0,\infty)$ from a partially ordered set $(P,\le)$ to the nonnegative real numbers.
  The key insight is Definition \ref{def:monotonic_decomp},
  which generalizes Corollary \ref{cor:mixture_struct} to handle all possible candidate dialect decompositions.

  The starting point is to relax from independent mixtures (Corollary \ref{cor:mixture_struct}) to a decomposition into monotonic functions.

  \begin{definition}
  \label{def:monotonic_decomp}
  Suppose that $f: P \to [0,\infty)$ is a function from a finite partially ordered set $(P,\le)$ to the nonnegative real numbers.
  A \emph{monotonic decomposition} expresses $f$ as a sum of functions 
  \begin{equation}
    \label{eq:monotonic_decomp}
    f(x) = \sum_{k=1}^{N} 1_{U_{y_k}}(x) g_k(x)
  \end{equation}
  where $g_k: P \to [0,\infty)$ is monotonic decreasing, and $U_{y_k} = \{x \in P : x \ge y_k \}$ is an upwardly closed set.

  Duplicate $y_k$ and/or $g_k$ are permissible in a monotonic decomposition.
\end{definition}

Monotonic decompositions generalize the probabilistic model posited in Section \ref{sec:statistical_model}.

\begin{proposition}
  \label{prop:dialect_monotonic_decomp}
  Suppose that there is a set of messages for which the probability each message pattern in each dialect is given by Equation \eqref{eq:dialect_def}.
  Assume that dialects are disjoint so that Corollary \ref{cor:mixture_struct} applies.
  Then Equation \eqref{eq:mixture_struct} is a monotonic decomposition of the joint probability distribution over all message patterns.
\end{proposition}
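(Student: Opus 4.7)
The plan is to show that the mixture expression from Corollary \ref{cor:mixture_struct} satisfies every requirement of Definition \ref{def:monotonic_decomp} term-by-term. The underlying partially ordered set is $P = (2^K, \subseteq)$, where each message pattern $(k_1, \dotsc, k_n)$ is identified with the subset of $K$ consisting of those messages taking the value $1$. With that identification, $f(x) := P(k_1, \dotsc, k_n)$ is a function from the finite poset $2^K$ to $[0,\infty)$, so it is a candidate for monotonic decomposition in the sense of Definition \ref{def:monotonic_decomp}.

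Next I would read off the structural correspondence. In Equation \eqref{eq:mixture_struct}, let the summation index $k$ of Definition \ref{def:monotonic_decomp} run over the dialects $A$, and take $y_k := R_A$, the set of required messages. Since $R_A \in 2^K$, the set $U_{R_A} = \{B \in 2^K : R_A \subseteq B\}$ is already in the upwardly closed form demanded by the definition, and Corollary \ref{cor:mixture_struct} writes the $A$-th summand as $1_{U_{R_A}}(k_1,\dotsc,k_n)\, g_A(k_1,\dotsc,k_n)\, P(A)$. It therefore suffices to set $g_k := P(A)\, g_A$.

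To check the remaining hypotheses, I would appeal directly to Lemma \ref{lem:struct}: each $g_A$ produced there is monotonic decreasing and nonnegative on $2^K$. Multiplication by the nonnegative constant $P(A)$ preserves both properties, so each $g_k$ is a monotonic decreasing function into $[0,\infty)$, completing the match with Definition \ref{def:monotonic_decomp}. Duplicate $R_A$ or $g_A$ across distinct dialects are explicitly allowed by the definition, so no injectivity argument is required.

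The main obstacle is bookkeeping of hypotheses rather than any genuine difficulty: Lemma \ref{lem:struct} and Corollary \ref{cor:mixture_struct} both require $P(k_j=1\mid A) < 1/2$ for every message, and the proposition's phrasing ``so that Corollary \ref{cor:mixture_struct} applies'' is what inherits that assumption. With this noted, the decomposition in Equation \eqref{eq:mixture_struct} is literally a sum of terms of the form $1_{U_{y_k}} g_k$ meeting every clause of Definition \ref{def:monotonic_decomp}, which is the desired conclusion.
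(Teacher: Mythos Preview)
Your proposal is correct and follows essentially the same approach as the paper: invoke Lemma \ref{lem:struct} to see that each dialect contributes a term $1_{U_{R_A}} g_A$ with $g_A$ monotonic decreasing, then invoke Corollary \ref{cor:mixture_struct} to see that the mixture sum has exactly the form required by Definition \ref{def:monotonic_decomp}. Your version is simply more explicit about the bookkeeping (identifying $P = (2^K,\subseteq)$, absorbing the constant $P(A)$ into $g_k$, and noting that the $P(k_j=1\mid A)<1/2$ hypothesis is inherited), but there is no substantive difference.
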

\begin{proof}
  According to Lemma \ref{lem:struct}, each dialect corresponds to a term of the form $1_{U_y} g(x)$ where $g$ is a monotonic decreasing function.
  According to Corollary, \ref{cor:mixture_struct},
  the formula for the joint probability distribution for all messages patterns is of precisely the same form as required
  by Equation \eqref{eq:monotonic_decomp} in Definition \ref{def:monotonic_decomp}.
\end{proof}

Our experimental results follow from a constructive proof of the following.

\begin{proposition}
  \label{prop:decomp}
  Every function $f: P \to [0,\infty)$ from a finite partially ordered set $(P,\le)$ to the nonnegative real numbers has a monotonic decomposition.
\end{proposition}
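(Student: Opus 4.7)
The plan is to prove this by induction on $|\operatorname{supp}(f)|$, where $\operatorname{supp}(f) := \{x \in P : f(x) > 0\}$, via a greedy construction that peels off one upward-closed term at a time. The base case $f \equiv 0$ is handled by the empty sum, which is vacuously a monotonic decomposition with $N = 0$.

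For the inductive step, I would first pick $y_1$ to be any minimal element of $\operatorname{supp}(f)$; such an element exists because $P$ is finite. I would then define $g_1 : P \to [0,\infty)$ by
\[
g_1(x) = \begin{cases} \min\{ f(z) : y_1 \le z \le x \} & \text{if } x \in U_{y_1}, \\ f(y_1) & \text{if } x \notin U_{y_1}, \end{cases}
\]
and verify three properties: (i) $g_1$ is monotonic decreasing on all of $P$, by a short case analysis on whether each of $x \le x'$ lies in $U_{y_1}$ (using that the minimization domain for $g_1(x')$ contains the one for $g_1(x)$ when both lie in $U_{y_1}$); (ii) $g_1(x) \le f(x)$ for $x \in U_{y_1}$, obtained by taking $z = x$ inside the minimum; and (iii) $g_1(y_1) = f(y_1) > 0$. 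Setting $f'(x) := f(x) - g_1(x) \cdot 1_{U_{y_1}}(x)$, properties (ii) and (iii) ensure $f' \ge 0$ and $f'(y_1) = 0$, so $\operatorname{supp}(f') \subsetneq \operatorname{supp}(f)$. Applying the inductive hypothesis to $f'$ and prepending the term $1_{U_{y_1}} g_1$ yields the desired monotonic decomposition of $f$.

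The main obstacle is arranging $g_1$ to be monotonic decreasing on all of $P$, rather than only on $U_{y_1}$, while also respecting the pointwise bound $g_1 \le f$ on $U_{y_1}$ so that the residual stays nonnegative. Setting $g_1$ equal to the constant $f(y_1)$ off of $U_{y_1}$ handles this cleanly, because the interior formula for $g_1$ never exceeds $f(y_1)$ (take $z = y_1$), and the would-be problematic case $x \in U_{y_1}$, $x' \notin U_{y_1}$ with $x \le x'$ is ruled out by upward-closedness of $U_{y_1}$. The construction terminates after at most $|\operatorname{supp}(f)|$ iterations, and is precisely the greedy algorithm referenced in Section~\ref{sec:methodology}.
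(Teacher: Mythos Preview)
Your argument is correct. The explicit formula $g_1(x)=\min\{f(z):y_1\le z\le x\}$ on $U_{y_1}$ is in fact the maximal monotonic decreasing lower bound for $f$ on $U_{y_1}$ (the object produced abstractly by Lemma~\ref{lem:max_monotonic_lower_bound}), and extending by the constant $f(y_1)$ outside $U_{y_1}$ is a clean way to get global monotonicity. The support-shrinking step is sound, so the induction goes through.

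The route, however, is genuinely different from the paper's. The paper inducts on the number of points at which $f$ fails to be monotonic decreasing: in the base case $f$ is already monotone and is split over the minimal elements of $P$ via a tie-breaking rule; in the inductive step one picks a \emph{minimal monotonicity violation} $y_i$, peels off the maximal monotone lower bound on $U_{y_i}$, and argues that the residual has strictly fewer violations (this last point requires the maximality of $g_{y_i}$ in an essential way). You instead induct on $|\operatorname{supp}(f)|$ and always peel at a minimal element of the support. Your version is shorter and needs neither Lemma~\ref{lem:join_semilattice} nor the base-case tie-breaking. The paper's version, on the other hand, anchors the $y_k$'s at monotonicity failures, which is exactly where new dialect roots are expected to appear in the statistical model; this connection feeds directly into the minimality analysis of Theorems~\ref{thm:min_decomp} and~\ref{thm:minimal_count}. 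So your final sentence overstates things: your construction uses the same local subroutine (maximal monotone lower bound), but it is not ``precisely'' the decomposition algorithm of Section~\ref{sec:methodology}, which selects its base points differently.
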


It is not true that the decompositions proposed in Proposition \ref{prop:decomp} are unique,
especially because there are several arbitrary choices that factor into its otherwise-constructive proof.

\begin{example}
  \label{eg:nonunique}
Consider the partially ordered set given by
\begin{equation*}
  \xymatrix{
    &D&\\
    B\ar[ur]&&C\ar[ul]\\
    &A\ar[ur]\ar[ul]&\\
    }
\end{equation*}
with the function $f$ given by
\begin{equation*}
  f(A):= 0, \;  f(B):= 4, \;  f(C):= 4, \text{ and }  f(D):= 5.
\end{equation*}
This function can be written as the sum
\begin{equation*}
  f = 1_{U_B} g_1 + 1_{U_C} g_2,
\end{equation*}
where
\begin{equation*}
  g_1(A):= 5, \;  g_1(B):= 4, \;  g_1(C):= 4, \text{ and }  g_1(D):= 2,
\end{equation*}
and
\begin{equation*}
  g_2(A):= 5, \;  g_2(B):= 4, \;  g_2(C):= 4, \text{ and }  g_2(D):= 3.
\end{equation*}
Evidently, both $g_1$ and $g_2$ are monotonic decreasing.
However, it is also true that
\begin{equation*}
  f = 1_{U_B} g_2 + 1_{U_C} g_1,
\end{equation*}
which contradicts the uniqueness of the decomposition, since exactly the same two open sets are used.
\end{example}

We will prove Proposition \ref{prop:decomp} constructively (algorithmically after certain choices are made) later in this section,
after establishing two Lemmas as tools.

\begin{lemma}
  \label{lem:join_semilattice}
  Suppose that $f: P \to [0,\infty)$ is a function from a partially ordered set $(P,\le)$ to the nonnegative real numbers.
    If $g_1,g_2: P \to [0,\infty)$ are two functions satisfying
      \begin{enumerate}
      \item Both $g_i$ are \emph{monotonic decreasing}: if $x \le y$ are two elements of $P$, then $g_i(x) \ge g_i(y)$, and
      \item $g_i(x) \le f(x)$ for all $x \in P$ and both $g_i$.
      \end{enumerate}
      Then the function
      \begin{equation*}
        h(x) := \max\{g_1(x), g_2(x)\}
      \end{equation*}
      satisfies the same two conditions.
\end{lemma}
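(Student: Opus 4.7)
The plan is to verify the two defining conditions for $h$ separately: monotonic decreasingness, and the upper bound $h(x) \le f(x)$. Both follow directly from the pointwise definition of $h$ as a maximum of the two hypothesized functions, so the proof should be short and elementary, with no recourse to the structure of $P$ beyond its order relation.

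First I would handle monotonicity. Taking any $x \le y$ in $P$, I would observe that $h(x) \ge g_i(x) \ge g_i(y)$ for each $i \in \{1,2\}$, using condition (1) on each $g_i$. Taking the maximum over $i$ on the right yields $h(x) \ge \max\{g_1(y), g_2(y)\} = h(y)$, which is exactly the monotonic decreasing condition for $h$.

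Next I would handle the upper bound. Since $g_1(x) \le f(x)$ and $g_2(x) \le f(x)$ by condition (2), any quantity dominated by both is dominated by $f(x)$; in particular $h(x) = \max\{g_1(x), g_2(x)\} \le f(x)$. This disposes of condition (2) for $h$.

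There is no real obstacle here; the only subtlety worth flagging in the write-up is that the maximum of two monotonic decreasing functions is taken pointwise on $P$, and that the argument does not require $P$ to be a lattice, a join-semilattice, or otherwise structured beyond being a poset (despite the lemma's name). This matters because in subsequent use the lemma will presumably be iterated to produce a largest monotonic minorant of $f$, and the above argument extends to any finite family $\{g_1, \dots, g_m\}$ by the same one-line estimate.
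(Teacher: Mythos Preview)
Your proof is correct and follows essentially the same approach as the paper's: a direct pointwise verification of both conditions. If anything, your monotonicity argument is slightly cleaner than the paper's, which introduces a without-loss-of-generality case split before reaching the same conclusion.
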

\begin{proof}
  The fact that $h(x) \le f(x)$ for all $x\in P$ follows immediately from the fact that $h(x)$ is equal to either $g_1(x)$, or $g_2(x)$, or both.
  Now suppose that $x \le y$ are two elements of $P$.
  Without loss of generality, suppose that $h(x) = g_1(x)$.
  This means that $g_1(x) \ge g_2(x)$, and by assumption $g_2(x) \ge g_2(y)$.
  By transitivity, this means that $h(x) = g_1(x) \ge g_2(y)$.
  By assumption, we have that $h(x) = g_1(x) \ge g_1(y)$ as well.
  Since $h(y)$ is equal to the larger of $g_1(y)$ and $g_2(y)$, this means that $h(x)$ is larger than $h(y)$.
\end{proof}

\begin{figure}
  \begin{center}
    \includegraphics[width=2.5in]{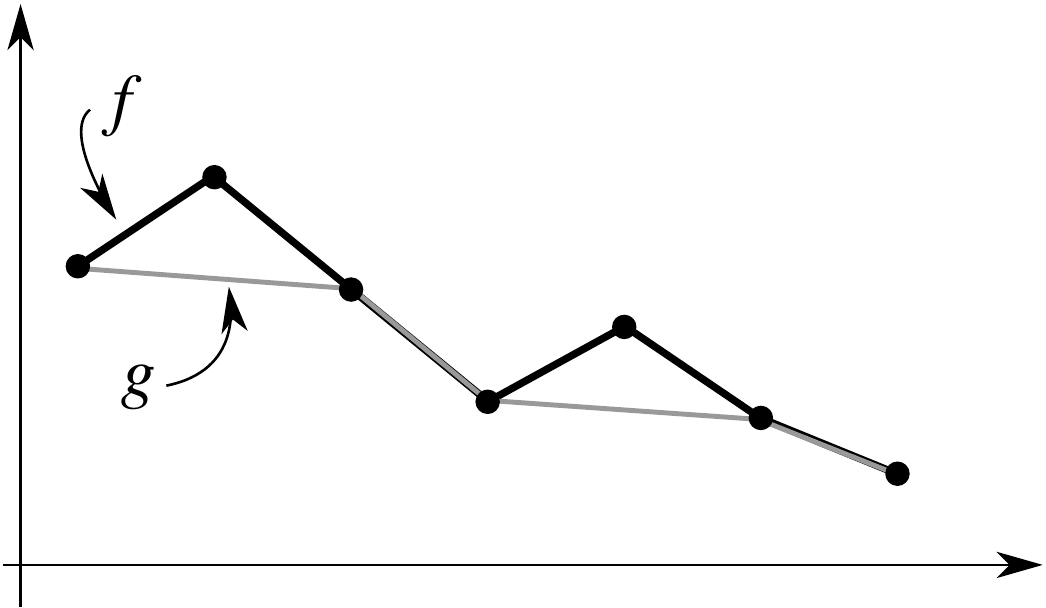}
    \caption{An example of the function $g$ guaranteed by Lemma \ref{lem:max_monotonic_lower_bound}}
    \label{fig:max_monotonic_lower_bound}
  \end{center}
\end{figure}

\begin{lemma}
  \label{lem:max_monotonic_lower_bound}
  Suppose that $f: P \to [0,\infty)$ is a function from a finite partially ordered set $(P,\le)$ to the nonnegative real numbers.
    The set of monotonic decreasing functions $g: P \to [0,\infty)$ such that $g(x) \le f(x)$ for all $x \in P$ has a unique maximal element.
      (See Figure \ref{fig:max_monotonic_lower_bound}.)
\end{lemma}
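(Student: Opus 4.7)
The plan is to exhibit the maximal element explicitly rather than extracting it abstractly from Zorn-type reasoning. For each $x \in P$, I would define
\[
  g(x) := \min\{f(y) : y \le x\},
\]
which is well-defined because $P$ is finite and $x \le x$ ensures the set on the right is nonempty. This candidate is motivated by the observation that the constraint $g \le f$ propagates downward through the order: if $g'$ is any admissible (monotonic decreasing and pointwise bounded by $f$) function and $y \le x$, then $g'(x) \le g'(y) \le f(y)$, so the best possible value for $g(x)$ is the minimum of $f$ over the principal downset of $x$.

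First I would verify that this $g$ is itself admissible. Pointwise domination $g \le f$ is immediate by taking $y = x$ inside the minimum. Monotonicity is equally direct: if $x \le z$ in $P$, then $\{y : y \le x\} \subseteq \{y : y \le z\}$, and the minimum of $f$ over the larger set is no larger, giving $g(z) \le g(x)$, which is exactly the monotonic decreasing condition used in this paper.

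Next I would establish maximality. Let $g'$ be any admissible function. The inequality displayed in the first paragraph, applied with the minimum taken over all $y \le x$, gives $g'(x) \le g(x)$ for every $x \in P$. Hence $g$ pointwise dominates every admissible function, and so is the unique maximum (and in particular the unique maximal element) of the admissible set ordered pointwise. If I wanted a second, more structural, proof of uniqueness, I would instead appeal to Lemma \ref{lem:join_semilattice}: if two distinct maximal elements $g_1, g_2$ existed, then $\max\{g_1, g_2\}$ would still be admissible and would strictly dominate at least one of them, contradicting maximality.

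There is no real obstacle here; the only step requiring any thought is picking the right candidate. Once one notices that monotonicity forces the pointwise upper bound on $g(x)$ to be not $f(x)$ but $\min_{y \le x} f(y)$, the rest of the argument is a line of set-inclusion bookkeeping. The finiteness hypothesis on $P$ is used only to guarantee that each minimum is attained; the result would generalize verbatim to any poset on which $f$ achieves its infimum on each principal downset.
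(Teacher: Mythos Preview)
Your proof is correct. The paper takes a slightly different route: its proof of the lemma is a two-line appeal to Lemma~\ref{lem:join_semilattice} (the admissible set is nonempty because the zero function lies in it, and closure under pairwise maximum then forces a unique maximal element), and only afterward, outside the proof environment, does it supply a greedy recursive construction $g(m) := \min\bigl(\{f(m)\} \cup \{g(x) : x < m\}\bigr)$ built level by level up the poset. Your closed-form $g(x) = \min\{f(y) : y \le x\}$ is exactly what that recursion computes once unfolded, so the two constructions coincide; the difference is that you verify admissibility and maximality directly from the formula, whereas the paper separates the abstract existence argument (via the lattice lemma) from the explicit construction. Your version is more self-contained and makes the role of finiteness transparent, while the paper's version keeps the proof short by outsourcing the work to Lemma~\ref{lem:join_semilattice}.
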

\begin{proof}
  All we need to show is that the set in question is nonempty, since Lemma \ref{lem:join_semilattice} does the rest.
  This is easy because the zero function is in the set.
\end{proof}

There is a greedy, recursive algorithm that constructs the function $g$ guaranteed by Lemma \ref{lem:max_monotonic_lower_bound}.

\begin{enumerate}
\item Start with the set of minimal elements $M_0$ of $P$.
\item We can without any trouble define $g(m) := f(m)$ for every $m \in M$, as this is evidently maximal in all cases.
\item In preparation for the recursive step, let $L_0 := M_0$.
  
\item For there recursive step, assume $g$ is already defined on some lower-closed subset $L_k$ of $P$.
  \begin{enumerate}
  \item Consider the set $M_k$ of minimal elements of $P \setminus L_k$.
  \item For each $m \in M_k$, define $g(m) := \min\{f(m)\} \cup \{g(x) : x < m \}$, noting that every $x$ in the latter set is an element of $L_k$ so $g(x)$ is well-defined.
    Defining $g$ in this way ensures that it is upper bounded by $f$, is monotonic decreasing, yet is otherwise maximal.
  \item In preparation for the next recursive step, let $L_{k+1} := L_k \cup M_k$.
  \end{enumerate}
\end{enumerate}

\begin{figure}
  \begin{center}
    \includegraphics[width=2.5in]{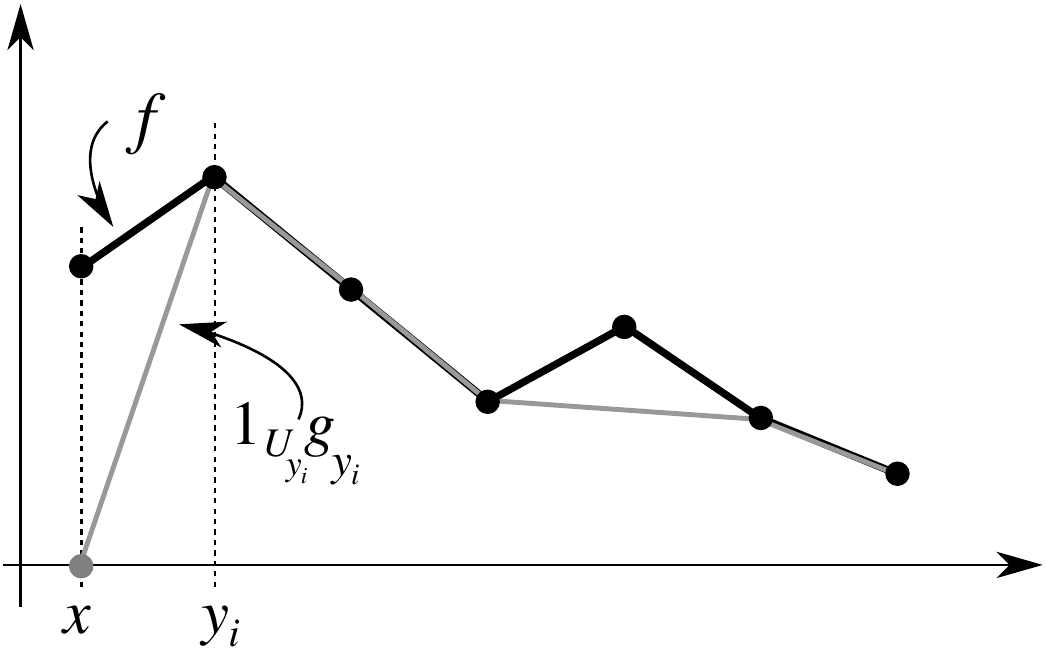}
    \caption{Construction of the function $g_{y_i}$ in the proof of Proposition \ref{prop:decomp}}
    \label{fig:decomp}
  \end{center}
\end{figure}

These maximal monotonic decreasing functions can be used to decompose an arbitrary function into a sum of monotonic decreasing functions whose domains are restricted appropriately.
This allows us to prove Proposition \ref{prop:decomp}.

\begin{proof}(of Proposition \ref{prop:decomp})
  Proceed by induction on the number of places where $f$ fails to be monotonic decreasing.

  \begin{itemize}
  \item Base case: $f$ is monotonic decreasing.
    If there is one minimal element, $y$ of $P$, then we merely take $y_1:=y$, and let $g_1 := f$.
    However, if there is more than one minimal element, then things become annoyingly non-unique.
    This can be resolved in various ways, for instance using the following tie-breaking procedure.
    Let $y_1, \dotsc, y_m$ be an arbitrary ordering of the minimal elements of $P$.
    Since they are all minimal, they are mutually incomparable elements of $P$.  Define
  \begin{equation*}
    g_i(x) := 1_{A_i}(x) f(x),
  \end{equation*}
  where
  \begin{equation*}
    A_i = U_{y_i} \setminus \bigcup_{j=1}^{i-1} U_{y_j}.
  \end{equation*}
  Notice that since each $1_{A_i}$ is monotonic decreasing---it is $1$ on $y_i$,
  but eventually drops to $0$ on sufficiently large elements of $P$---the resulting $g_i$ functions are also monotonic decreasing.
  Moreover, by construction each element $x$ of $P$ is an element of exactly one $A_i$ set.
  Therefore, the decomposition formula for $f$ holds.

\item Induction case: Suppose that $f$ is not monotonic decreasing at $k$ elements $\{y_1, \dotsc, y_k\}$ of $P$.
  For each of these elements $y_i$, there is an $x$ in $P$ with $x \le y_i$ but $f(x) < f(y_i)$.
  At least one of these $y_i$ is minimal among the set $\{y_1, \dotsc, y_k\}$, which means that for each $j \not= i$,
  either $y_i < y_j$ or $y_i$ and $y_j$ are incomparable.
  Use Lemma \ref{lem:max_monotonic_lower_bound} to construct a maximal monotonic decreasing function $g_{y_i}$ on $U_{y_i}$ that is bounded above by $f$.
  We will show that the residual function $f-1_{U_{y_i}} g_{y_i}$ fails to be monotonic at not more than $k-1$ elements of $P$.  

  We assumed that there was an $x$ such that $f(x) < f(y_i)$.
  Due to the hypothesis that $g_{y_i}$ is bounded above by $f$,
  this means that $g_{y_i}(y_i) \le f(y_i)$.
  On the other hand, given that Lemma \ref{lem:max_monotonic_lower_bound} asserts a maximal such $g_{y_i}$ exists on $U_{y_i}$,
  we must conclude that $g_{y_i}(y_i) = f(y_i)$,
  as shown in Figure \ref{fig:decomp}, since that value has no further impact on the monotonicity of $g_{y_i}$.
  Therefore, the residual function $f-1_{U_{y_i}} g_{y_i}$ takes the value $0$ on $y_i$, and therefore automatically satisfies
  \begin{equation*}
    \begin{aligned}
      f(x) &= f(x) - 1_{U_{y_i}}(x) g_{y_i}(x) \\
           &\ge 0 =  f(y_i) - 1_{U_{y_i}}(y_i) g_{y_i}(y_i).
    \end{aligned}
  \end{equation*}
  Therefore, at least one violation of monotonicity in $f$ is not present in the residual.

  Let us establish that no new violations of monotonicity occur in the residual.
  Suppose that $x \le z$ are two elements of $P$ for which $f(x) \ge f(z)$.
  If both elements are outside $U_{y_i}$, then the residual is unchanged from $f$ on these two elements.
  If $x$ is outside $U_{y_i}$ but $z \in U_{y_i}$, then
  \begin{equation*}
    \begin{aligned}
      f(x) - 1_{U_{y_i}}(x) g_{y_i}(x) &= f(x) \\
      & \ge f(z)\\
      & \ge f(z) - 1_{U_{y_i}}(z) g_{y_i}(z),
    \end{aligned}
  \end{equation*}
  since $g_{y_i}$ is nonnegative by construction.
  Finally, assume that both $x$ (and therefore $z$) are elements of $U_{y_i}$ and that $f(x) \ge f(z)$.
  While an arbitrary monotonic decreasing function $h$ bounded above by $f$ might not result in $f(x) - h(x) \ge f(z) - h(z)$,
  this cannot happen with $g_{y_i}$ due to its maximality.
  We establish this by way of contradiction; suppose that
  \begin{equation*}
    f(x) - g_{y_i}(x) < f(z) - g_{y_i}(z).
  \end{equation*}
  Rearranging this inequality yields
  \begin{equation*}
  0 \le f(x) - f(z) < g_{y_i}(x) - g_{y_i}(z),
  \end{equation*}
  which means that there is a monotonic decreasing $h$ with $h(x) = g_{y_i}(x)$ and $g_{y_i}(z) < h(z) \le f(z)$,
  contradicting the maximality of $g_{y_i}$.
  We have therefore established that the residual $f-1_{U_{y_i}}g_{y_i}$ has strictly fewer violations of monotonicity than $f$.
  \end{itemize}
\end{proof}

Unfortunately, the decomposition proposed in the proof of Proposition \ref{prop:decomp} is not unique.
This means that there are sometimes different possible choices of dialects that result in the same message probabilities.
In the Appendix Section \ref{sec:appendix}, Theorems \ref{thm:min_decomp} and \ref{thm:minimal_count} show that the decomposition constructed by the algorithm nevertheless yields a bound on the number and structure of dialects.
In particular, the sets of required messages found by the procedure correspond to those of some true dialects,
though there may be other dialects that remain to be found.

\section{Conclusion}

This paper presented a novel statistically-based method for partitioning sets of files into format dialects based upon their behaviors when parsed.
Theoretically, our method yields the coarsest such partition that could be consistent with the statistical model.
This means that a format analyst can begin their analysis with a minimal number of dialects.
In practice, an analyst needs to consider about half the number of dialects as distinct message patterns.
Intuitively, this considerably reduces their cognitive load when studying a complex format.

\section*{Acknowledgments}

The authors would like to thank the SafeDocs test and evaluation team,
including NASA (National Aeronautics and Space Administration) Jet Propulsion Laboratory, California Institute of Technology and the PDF Association, Inc., for providing the test data.
The authors would like to thank Cory Anderson for the initial processing of the files into sets of messages.

This material is based upon work supported by the Defense Advanced Research Projects Agency (DARPA) SafeDocs program under contract HR001119C0072.
Any opinions, findings and conclusions or recommendations expressed in this material are those of the author and do not necessarily reflect the views of DARPA.

\section*{Conflict of interest}
The authors state that there is no conflict of interest.

\bibliographystyle{IEEEtran}
\bibliography{dowkersegment_bib}

\section{Appendix}
\label{sec:appendix}

Although Proposition \ref{prop:decomp} constructs a monotonic decomposition of a functions,
nonuniqueness can impede its practical utility.
Theorem \ref{thm:min_decomp} in Section \ref{sec:bounds} asserts that the decomposition constructed in the proof of Proposition \ref{prop:decomp} is minimal
in the sense that it cannot be decomposed further.
Furthermore, Theorem \ref{thm:minimal_count} asserts that the decomposition so constructed finds an unambiguous lower bound on the true number of dialects.

\subsection{Bounding the structure of dialects}
\label{sec:bounds}

An effective way to handle the ambiguity present among possible dialect decompositions is to simply embrace it.
Some decompositions are evidently finer, in that they split the set of files into smaller dialects.
This situation is easily characterized by the notion of \emph{refinement}.

\begin{definition}
Suppose that a function $f: P \to [0,\infty)$ from a finite partially ordered set $(P,\le)$ has multiple monotonic decompositions.
  We will say that the monotonic decomposition
  \begin{equation*}
    f(x) = \sum_{k=1}^{N} 1_{U_{z_k}}(x) h_k(x)
  \end{equation*}
  \emph{refines}
  the monotonic decomposition
  \begin{equation*}
    f(x) = \sum_{j=1}^{M} 1_{U_{y_j}}(x) g_j(x)
  \end{equation*}
  if for every $k = 1, \cdots, N$, there is a $j$ such that
  \begin{enumerate}
  \item $U_{z_k} \subseteq U_{y_j}$, and
  \item $h_k(x) \le g_j(x)$ for every $x \in U_{z_k}$.
  \end{enumerate}
\end{definition}

\begin{lemma}
  \label{lem:decomp_order}
  The set of monotonic decompositions of a function $f: P \to [0,\infty)$ from a finite partially ordered set $(P,\le)$
    is itself a preordered set under refinement.
\end{lemma}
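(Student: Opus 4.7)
The plan is to verify that refinement, as defined just before the lemma, satisfies the two defining properties of a preorder: reflexivity and transitivity. Neither step requires new ideas; both amount to unpacking the definition and chaining inequalities.

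For reflexivity, I would start with a monotonic decomposition $f(x) = \sum_{j=1}^{M} 1_{U_{y_j}}(x) g_j(x)$ and show it refines itself. Given any index $k \in \{1, \dots, M\}$ in the ``refining'' copy, I simply choose $j := k$ in the ``refined'' copy. Then $U_{y_k} \subseteq U_{y_k}$ holds trivially, and $g_k(x) \le g_k(x)$ for every $x \in U_{y_k}$ is the trivial equality. Both conditions of the refinement definition are therefore satisfied.

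For transitivity, I would suppose that three monotonic decompositions are given, say $f = \sum_k 1_{U_{z_k}} h_k$, $f = \sum_j 1_{U_{y_j}} g_j$, and $f = \sum_i 1_{U_{w_i}} e_i$, with the first refining the second and the second refining the third. Fix any index $k$. By the first refinement, there exists $j$ with $U_{z_k} \subseteq U_{y_j}$ and $h_k(x) \le g_j(x)$ on $U_{z_k}$. By the second refinement, for that $j$ there exists $i$ with $U_{y_j} \subseteq U_{w_i}$ and $g_j(x) \le e_i(x)$ on $U_{y_j}$. Chaining the set inclusions gives $U_{z_k} \subseteq U_{w_i}$; and for any $x \in U_{z_k} \subseteq U_{y_j}$ the inequality $g_j(x) \le e_i(x)$ is available, so combined with $h_k(x) \le g_j(x)$ I obtain $h_k(x) \le e_i(x)$ on all of $U_{z_k}$. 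Both refinement conditions are therefore met for the pair (first decomposition, third decomposition), establishing transitivity.

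There is no real obstacle here, since refinement has been defined by an existential quantifier over indices combined with a set inclusion and a pointwise inequality, both of which compose cleanly. I would only note in passing that refinement need not be antisymmetric (distinct decompositions can refine each other, as Example \ref{eg:nonunique} implicitly shows by producing two decompositions that share the same upward-closed sets $U_{y_k}$), which is precisely why the lemma claims a preorder rather than a partial order.
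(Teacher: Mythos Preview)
Your proof is correct and follows essentially the same approach as the paper: a direct verification of reflexivity and transitivity by unpacking the definition of refinement and chaining the set inclusions and pointwise inequalities. The paper's proof is slightly terser but makes exactly the same moves.
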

\begin{proof}
  This is merely a straightforward verification of the axioms.  Suppose that we have three monotonic decompositions of $f$,
  \begin{equation*}
    \begin{aligned}
      f(x) &= \sum_{k=1}^{N} 1_{U_{z_k}}(x) h_k(x) \\
      & = \sum_{j=1}^{M} 1_{U_{y_j}}(x) g_j(x) \\
      &= \sum_{\ell=1}^{P} 1_{U_{w_\ell}}(x) p_{\ell}(x).
    \end{aligned}
  \end{equation*}
  \begin{itemize}
  \item Reflexivity is trivial; both $\subseteq$ for sets and $\le$ for functions are reflexive.

  \item For transitivity, suppose that the first monotonic decomposition refines the second,
    and that the second refines the third.
    For every $k=1, \cdots, N$, there is a $j$ such that $U_{z_k} \subseteq U_{y_j}$.
    Yet there is also an $\ell$ such that $U_{y_j} \subseteq U_{w_\ell}$.
    Thus $U_{z_k} \subseteq U_{w_\ell}$.
    For exactly these same indices, we have that
    \begin{equation*}
      h_k(x) \le g_j(x) \le p_{\ell}(x)
    \end{equation*}
    for all $x \in U_{z_k}$.
    Hence, the first monotonic decomposition refines the third.
  \end{itemize}
\end{proof}

We seek monotonic decompositions that are \emph{minimally refined},
there is no other monotonic decomposition refined by it.
Because of cases like Example \ref{eg:nonunique}, minimally refined monotonic decompositions are not unique.
In Example \ref{eg:nonunique}, both decompositions are minimally refined and neither refines the other.
Nevertheless, we have the following.

\begin{theorem}
  \label{thm:min_decomp}
  The procedure defined in the proof of Proposition \ref{prop:decomp} yields a minimally refined monotonic decomposition for an arbitrary nonnegative function $f$ on a finite partially ordered set.
\end{theorem}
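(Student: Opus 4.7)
The plan is to argue by contradiction. Suppose the algorithm produces $D = \sum_{k=1}^N 1_{U_{y_k}} g_{y_k}$, and suppose there exists another monotonic decomposition $D' = \sum_{j=1}^M 1_{U_{w_j}} p_j \neq D$ that $D$ refines. The refinement hypothesis supplies a map $\phi$ with $U_{y_k} \subseteq U_{w_{\phi(k)}}$ and $g_{y_k}(x) \leq p_{\phi(k)}(x)$ for $x \in U_{y_k}$. Two elementary observations set the stage: first, $p_j(x) \leq f(x)$ for $x \in U_{w_j}$, because the nonnegative pieces of $D'$ sum to $f$; second, the pointwise identity $\sum_{j:\,w_j \leq x} p_j(x) = \sum_{k:\,y_k \leq x} g_{y_k}(x)$ holds everywhere, since both sides equal $f(x)$.

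The goal is to establish piece-by-piece that $D' = D$, which would contradict $D' \neq D$. I would induct on the algorithm's pieces in the order they are produced, $y_1, y_2, \ldots, y_N$, maintaining the invariant that $w_{\phi(k)} = y_k$ and $p_{\phi(k)} = g_{y_k}$ on $U_{y_k}$. At stage $k$, the inductive hypothesis pins down every earlier piece's contribution, so subtracting these from the pointwise identity restricted to $U_{y_k}$ shows that $p_{\phi(k)}|_{U_{y_k}}$ (together with any contribution from later pieces that touch $U_{y_k}$) must equal the residual $f_{k-1}|_{U_{y_k}}$. Since $p_{\phi(k)}|_{U_{y_k}}$ is monotonic decreasing and is thereby bounded above by $f_{k-1}$ on $U_{y_k}$, Lemma \ref{lem:max_monotonic_lower_bound} applied to the algorithm's maximal choice $g_{y_k}$ yields $p_{\phi(k)} \leq g_{y_k}$ on $U_{y_k}$; combined with the refinement inequality $g_{y_k} \leq p_{\phi(k)}$, this forces equality. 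The identity $w_{\phi(k)} = y_k$ (rather than $w_{\phi(k)}$ strictly below $y_k$) would be forced because $y_k$ is a minimal violation at stage $k$: a strictly smaller $w_{\phi(k)}$ would demand $p_{\phi(k)}$ be nonzero on $U_{w_{\phi(k)}} \setminus U_{y_k}$ while remaining monotonic and bounded by the residual, contradicting the minimality of $y_k$ among residual violations.

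The remaining task is to rule out extra pieces of $D'$ with $j \notin \mathrm{image}(\phi)$ that are not identically zero. Since the algorithm's pieces already sum to $f$ and the inductive argument has matched them one-for-one against pieces of $D'$ in the image of $\phi$, the pointwise identity forces any such extra piece to vanish on its support and thus be trivially deletable. Assembling these equalities gives $D' = D$, the desired contradiction. The main obstacle will be the bookkeeping in the inductive step: the immediate bound $p_{\phi(k)} \leq f$ is too weak, and obtaining the sharper $p_{\phi(k)} \leq f_{k-1}$ on $U_{y_k}$ requires simultaneously accounting for any contributions from pieces $p_{\phi(k')}$ with $k' > k$ whose root $y_{k'}$ lies in $U_{y_k}$. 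I expect this can be resolved by processing the induction in an order compatible with the partial order on $P$ (smaller elements first), and by checking that the algorithm's choice of minimal violations always respects such an order; verifying this compatibility is the subtle technical point.
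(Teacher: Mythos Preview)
Your plan has a genuine gap: the target you set for yourself, namely $D' = D$, is strictly stronger than what the theorem asserts and is in fact false. Take the chain $a < b < c$ with $f(a)=0$, $f(b)=2$, $f(c)=3$. The algorithm of Proposition~\ref{prop:decomp} produces $D = 1_{U_b} g_1 + 1_{U_c} g_2$ with $g_1(b)=g_1(c)=2$ and $g_2(c)=1$. Now let $D' = 1_{U_b} g_1 + 1_{U_c} h + 1_{U_c} h$ with $h(c)=\tfrac12$. Then $D$ refines $D'$: the first term of $D$ matches the first term of $D'$, and for the second term of $D$ one may take $\phi(2)=1$, since $U_c \subseteq U_b$ and $g_2(c)=1 \le g_1(c)=2$. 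Yet $D' \neq D$. What is true is that $D'$ also refines $D$, so the two are equivalent in the refinement preorder; minimality in Theorem~\ref{thm:min_decomp} must be read in that preorder sense (cf.\ Lemma~\ref{lem:decomp_order}).

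This same example breaks two of your intermediate claims. First, the refinement map $\phi$ need not be injective: both pieces of $D$ can point to the single $U_b$ term of $D'$. Your one-for-one matching and the subsequent ``extra pieces must vanish'' step both rely on injectivity. Second, your argument that $w_{\phi(k)} = y_k$ fails: here $w_{\phi(2)} = b < c = y_2$, and $p_{\phi(2)}$ is indeed nonzero on $U_b \setminus U_c$, but this has nothing to do with violations of the residual $f_1$, because $p_{\phi(2)}$ is a piece of a decomposition of $f$, not of $f_1$. The link you assert between $p_{\phi(k)}$ and $f_{k-1}$ is exactly the gap you flagged, and it cannot be closed in the way you sketch. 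The paper's proof avoids this by processing minimal elements of the (successively truncated) poset rather than the algorithm's violation points directly: when $y_k$ is minimal in the current domain, $w_{\phi(k)} \le y_k$ forces $w_{\phi(k)} = y_k$ for free, and then maximality from Lemma~\ref{lem:max_monotonic_lower_bound} together with $p_{\phi(k)} \le f$ immediately pins down $p_{\phi(k)} = g_{y_k}$, so the bootstrap to the residual bound never needs to be argued.
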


\begin{corollary}
  \label{cor:min_decomp_dialects}
  Because each dialect decomposition corresponds to a monotonic decomposition according to Proposition \ref{prop:dialect_monotonic_decomp},
  Theorem \ref{thm:min_decomp} implies that a lower bound on the number of dialects given for a probability distribution as expressed by Equation \ref{eq:mixture_struct}
  is the minimum number of terms in a minimally refined monotonic decomposition.
\end{corollary}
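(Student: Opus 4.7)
The plan is to pass from the true dialect decomposition to a minimally refined monotonic decomposition without inflating the term count. Let $D$ denote the true number of dialects appearing in Equation~\eqref{eq:mixture_struct}. By Proposition~\ref{prop:dialect_monotonic_decomp}, the dialect decomposition is itself a monotonic decomposition of the joint probability function, consisting of exactly $D$ terms (one per dialect). Establishing the corollary therefore reduces to exhibiting \emph{some} minimally refined monotonic decomposition with at most $D$ terms, since the stated ``minimum number of terms'' is taken over all minimally refined decompositions.

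First, I would produce a descent procedure through the refinement preorder of Lemma~\ref{lem:decomp_order}. Starting with the monotonic decomposition $\mathcal{A}$ coming from the true dialects, at each step either $\mathcal{A}$ is already minimally refined (in which case we stop), or there is a strictly coarser monotonic decomposition $\mathcal{A}'$ refined by $\mathcal{A}$; replace $\mathcal{A}$ by $\mathcal{A}'$ and iterate. Termination follows from the finiteness of the underlying poset $P$: the multiset of support upsets $\{U_{y_k}\}$ of any monotonic decomposition is drawn from the finite power set $2^P$, and within a fixed support pattern the associated functions form a well-founded structure under pointwise comparison, so infinite strictly-descending chains cannot occur.

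The principal obstacle is to show that this descent does not increase the term count, so that the resulting minimally refined $\mathcal{M}$ satisfies $|\mathcal{M}| \le D$. Unlike the set-theoretic case, the definition of refinement does not by itself constrain term counts in either direction. I would attack this by choosing each descent step canonically: whenever two terms $(U_{z_k}, h_k)$ and $(U_{z_{k'}}, h_{k'})$ of the current decomposition both refine into the same piece of a coarser decomposition, they can be consolidated using Lemma~\ref{lem:join_semilattice} and Lemma~\ref{lem:max_monotonic_lower_bound} (which guarantee that taking pointwise maxima of monotonic lower bounds preserves monotonicity and nonnegativity). The maximality property used in Theorem~\ref{thm:min_decomp} then ensures no new terms need be introduced to complete the descent step.

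Once the descent terminates at a minimally refined $\mathcal{M}$ with $|\mathcal{M}| \le D$, the minimum over all minimally refined monotonic decompositions is bounded above by $|\mathcal{M}|$, hence by $D$. Since this argument applies to the monotonic decomposition arising from any dialect decomposition consistent with Equation~\eqref{eq:mixture_struct}, the minimum term count over minimally refined monotonic decompositions is a lower bound on the true number of dialects, which is exactly the claim of the corollary.
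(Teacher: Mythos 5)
Your overall strategy---pass from the $D$-term dialect decomposition to a minimally refined monotonic decomposition without inflating the term count, then take the minimum---is the right reduction, and it is essentially what the paper intends: the corollary is stated without a separate proof, and the substantive counting work is deferred to Lemma \ref{lem:minimal_count} and Lemma \ref{lem:irredundant_cover_bound} (which together give Theorem \ref{thm:minimal_count}). However, your descent argument has a termination gap. The monotonic functions $g_k$ take values in $[0,\infty)$, so the set of monotonic decompositions of $f$ is infinite and is not well-founded under refinement: holding the support pattern fixed, the associated functions can strictly increase along an infinite chain while staying bounded above by $f$ (for instance $g,\ g+\tfrac{\epsilon}{2},\ g+\tfrac{3\epsilon}{4},\dots$). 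The paper only obtains finiteness of the decomposition poset for integer-valued $f$ (Proposition \ref{prop:max_refined}). The repair is to avoid stepwise descent entirely and instead invoke the constructive procedure of Proposition \ref{prop:decomp}, which Theorem \ref{thm:min_decomp} certifies produces a minimally refined decomposition in finitely many steps.

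The more serious gap is the consolidation step, which is exactly where the corollary's content lives. You correctly observe that refinement, as defined, does not by itself constrain term counts in either direction, but your proposed fix does not close the hole: if two terms $1_{U_{z_k}}h_k$ and $1_{U_{z_{k'}}}h_{k'}$ refine the same coarser piece, their sum is generally not of the form $1_{U_y}\,g$ for a single principal upset $U_y$ and monotonic $g$ (take $z_k$ and $z_{k'}$ incomparable), and replacing the sum by the pointwise maximum from Lemma \ref{lem:join_semilattice} changes the total so the terms no longer sum to $f$. Hence the inequality $\#\mathcal{M}\le D$ is never actually established. The paper's route around this is to count at the level of the support covers rather than the functions: Lemma \ref{lem:irredundant_cover_bound} shows that any cover refining an irredundant cover has at least as many sets, and Lemma \ref{lem:minimal_count} shows the support sets of a minimal irredundant decomposition are forced by $f$. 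Some substitute for that cover-counting argument is required; without it the corollary does not follow from the descent alone.
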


Before attempting the proof of Theorem \ref{thm:min_decomp},
it helps to consider the case of monotonic functions before considering the general situation.

\begin{lemma}
  \label{lem:min_refined_special}
  Suppose that $f: P \to [0,\infty)$ is a monotonic function from a finite partially ordered set $(P,\le)$.
  If $(P,\le)$ has a unique minimal element $p$, then every monotonic decomposition of $f$ refines
  \begin{equation}
    \label{eq:min_refined_special}
    f = 1_{U_p} f.
  \end{equation}
  Additionally, this monotonic decomposition refines no other monotonic decomposition of $f$.
\end{lemma}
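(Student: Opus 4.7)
The plan is to exploit the hypothesis that $p$ is the unique minimal element of $P$, which forces $U_p = P$. Consequently the candidate decomposition $f = 1_{U_p} f$ has the maximal possible support and uses $f$ itself as its monotonic factor, so one should expect it to be a ``top'' element under refinement among all single-term monotonic decompositions of $f$.

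For the first claim (every monotonic decomposition of $f$ refines $f = 1_{U_p} f$), I would take an arbitrary monotonic decomposition $f(x) = \sum_{k=1}^N 1_{U_{z_k}}(x) h_k(x)$ and verify the two refinement conditions against the single term of $f = 1_{U_p} f$ (indexed by $y_1 = p$, $g_1 = f$). The containment $U_{z_k} \subseteq U_p = P$ is automatic for every $k$. For the pointwise bound, fix $x \in U_{z_k}$; since every summand in the given decomposition is nonnegative, discarding all but the $k$-th term yields $h_k(x) = 1_{U_{z_k}}(x) h_k(x) \le f(x)$, which is exactly the second condition.

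For the second claim, I would suppose that $f = 1_{U_p} f$ refines some decomposition $f = \sum_{j=1}^M 1_{U_{y_j}}(x) g_j(x)$ and show that this target decomposition must coincide with $1_{U_p} f$ up to summands that contribute nothing. Applying the refinement definition to the single term $(p,f)$ yields an index $j$ with $U_p = P \subseteq U_{y_j}$, forcing $U_{y_j} = P$; hence $y_j$ is minimal in $P$, and by uniqueness $y_j = p$. The pointwise refinement condition then gives $f(x) \le g_j(x)$ on all of $P$, while nonnegativity of the other summands gives $f(x) \ge 1_{U_{y_j}}(x) g_j(x) = g_j(x)$ on $P$. These combine to $g_j = f$, and then the residual $\sum_{\ell \neq j} 1_{U_{y_\ell}} g_\ell$ is identically zero.

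The main point requiring care is the logical reading of ``refines no other monotonic decomposition''. Strictly, one can always pad a decomposition with zero summands, so the argument really shows that the only decompositions refined by $1_{U_p} f$ are trivial extensions of it; articulating that such padded decompositions are not genuinely distinct is the only subtle step. Everything else reduces to the definition of refinement and the nonnegativity of the $g_j$ and $h_k$.
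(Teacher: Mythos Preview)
Your proposal is correct and follows essentially the same route as the paper: both arguments use $U_p = P$ to make the containment condition trivial, use nonnegativity of the summands to get $h_k \le f$, and for the converse identify a term with $y_j = p$, combine the refinement inequality $f \le g_j$ with the summation bound $g_j \le f$ to force $g_j = f$. Your explicit remark about zero-padded decompositions is a nice addition that the paper leaves implicit (and only later formalizes via the notion of irredundant decompositions).
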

\begin{proof}
  For the first statement, suppose that
  \begin{equation*}
    f(x) = \sum_{k=1}^{N} 1_{U_{z_k}}(x) h_k(x).
  \end{equation*}
  Evidently, since $p$ is the minimal element of $(P,\le)$, it follows that $U_{z_k} \subseteq 1_{U_p} = P$.
  Moreover, since $h_k(x)$ is nonnegative and the collection sums to $f$, it follows that $h_k \le f$.

  Conversely, the only way that the monotonic decomposition defined by Equation \ref{eq:min_refined_special} refines any other is that $z_k =p$ for some $k$.
  If this is the case, refinement requires that $f \le h_k$.
  However, since the $h_k$ must sum to $f$, is still the case that $h_k \le f$.
  Hence $h_k = f$.
\end{proof}

\begin{proof} (of Theorem \ref{thm:min_decomp})
  Suppose that $f: P \to [0,\infty)$ is a function from a finite partially ordered set $(P,\le)$.
    
  Lemma \ref{lem:min_refined_special} can be used with Lemma \ref{lem:max_monotonic_lower_bound} to rule out refinement by certain decompositions supported on minimal elements.
  Suppose tentatively that $p \in P$ is a minimal element of $(P,\le)$.
  If $g_p: U_p \to [0,\infty)$ is the unique maximal monotonic decreasing function such that $g \le f$ guaranteed by Lemma \ref{lem:max_monotonic_lower_bound},
    then
    \begin{equation*}
      f = 1_{U_p} g_p + \sum_{k=1}^{N} 1_{U_{z_k}}(x) h_k(x)
    \end{equation*}
    cannot refine any monotonic decomposition which does not also contain the term $1_{U_p} g_p$.

With this situation treated, now consider the case where there are multiple minimal elements of $(P,\le)$.    
Let $p_1, \dotsc, p_m$ be all of the minimal elements of a finite partially ordered set $(P,\le)$.

Define monotonic decreasing functions $g_i : P \to [0,\infty)$ for each $i=1,\dotsc,m$ inductively via
  \begin{itemize}
  \item Base case: $g_1$ is the unique maximal monotonic decreasing function on $U_{p_1}$ such that $g \le f$ guaranteed by Lemma \ref{lem:max_monotonic_lower_bound},
  \item Induction case: $g_i$ is the unique maximal monotonic decreasing function on $U_{p_1}$ such that $g \le \left(f - \sum_{j=1}^i 1_{U_{p_i}} g_i\right)$ guaranteed by Lemma \ref{lem:max_monotonic_lower_bound}.
  \end{itemize}
  
  Any monotonic decomposition of the form
  \begin{equation*}
    f(x) = \sum_{j=1}^m 1_{U_{p_i}} g_i(x) + \sum_{k=1}^N 1_{U_{y_k}} h_k(x)
  \end{equation*}
  cannot refine any monotonic decomposition not containing all of the terms in the first sum.

  The reader is cautioned that the ordering of the $p_i$ in the above construction will generally yield different corresponding $g_i$ functions.
  The monotonic decompositions so arising cannot refine each other as a result.
  The tie-breaking procedure used in the proof of Proposition \ref{prop:decomp} provides one such option for an ordering.

The full statement of the Theorem follows by mimicking the induction case of the proof of Proposition \ref{prop:decomp}.
That is, we repeat the above procedure with $\left(f -  \sum_{j=1}^m 1_{U_{p_i}} g_i \right)$ instead of $f$,
and restrict the domain to $P \setminus \{p_1, \dotsc, p_m\}$.
At each iteration, we obtain more terms of the minimally refined monotonic decomposition.
\end{proof}

\subsection{The structure of the refinement preorder}
\label{sec:refinement}

In the previous section, it was shown (Lemma \ref{lem:decomp_order}) that monotonic decompositions are preordered by refinement.
We now establish that this preorder can be strengthened to a partial order (Proposition \ref{prop:decomp_partial_order}) if redundancies of a certain kind are eliminated.

\begin{lemma}
  \label{lem:irredundant}
  Suppose that there are two monotonic decompositions of a function $f: P \to [0,\infty)$ that refine each other,
  and that these two decompositions can be written as
  \begin{equation*}
    f(x) = \sum_{k=1}^{N} 1_{U_{z_k}}(x) h_k(x) = \sum_{j=1}^{M} 1_{U_{y_j}}(x) g_j(x).
  \end{equation*}
  If we assume that the sets $\{z_k\}$ and $\{y_j\}$ are antichains in $P$,
   then the two decompositions differ at most by a reordering of terms.
\end{lemma}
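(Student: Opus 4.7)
The plan is to exploit mutual refinement and the antichain hypothesis to produce a bijection between the two index sets that matches both the base elements $z_k = y_{\phi(k)}$ and their attached monotonic functions $h_k = g_{\phi(k)}$, so that the two decompositions agree as multisets of terms.

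First I would translate the set inclusion $U_{z_k} \subseteq U_{y_j}$ into the order relation $y_j \le z_k$ in $P$. The definition of refinement (applied in both directions) then supplies index maps $\phi \colon \{1,\ldots,N\} \to \{1,\ldots,M\}$ and $\psi \colon \{1,\ldots,M\} \to \{1,\ldots,N\}$ such that $y_{\phi(k)} \le z_k$ and $z_{\psi(j)} \le y_j$, together with the pointwise inequalities $h_k \le g_{\phi(k)}$ on $U_{z_k}$ and $g_j \le h_{\psi(j)}$ on $U_{y_j}$.

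Next I would compose the two order relations: from $z_{\psi(\phi(k))} \le y_{\phi(k)} \le z_k$ we obtain a comparison between two elements of the antichain $\{z_k\}$, forcing $z_{\psi(\phi(k))} = z_k$ and hence collapsing the intermediate step to $y_{\phi(k)} = z_k$. The symmetric argument yields $z_{\psi(j)} = y_j$, so after any necessary identification of indices (see below) $\phi$ and $\psi$ are mutual inverses. With $y_{\phi(k)} = z_k$, the up-sets agree, and the pointwise inequalities combine into the sandwich $h_k \le g_{\phi(k)} \le h_{\psi(\phi(k))} = h_k$ on $U_{z_k} = U_{y_{\phi(k)}}$. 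This gives $h_k = g_{\phi(k)}$ on the support of the corresponding indicator, which is all that matters in the decomposition, and so $\phi$ exhibits a reordering that identifies the two decompositions term by term.

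The main obstacle will be the bookkeeping around duplicate indices, which Definition \ref{def:monotonic_decomp} explicitly permits. The step $z_{\psi(\phi(k))} = z_k \Rightarrow \psi(\phi(k)) = k$ is valid only when each $z_k$ occurs once. I would dispose of this by preprocessing: whenever two indices $k,k'$ satisfy $z_k = z_{k'}$, replace them by a single term $(z_k, h_k + h_{k'})$, which is again a valid entry in a monotonic decomposition because sums of nonnegative monotonic decreasing functions are monotonic decreasing. Once both decompositions have been consolidated in this way, the sets $\{z_k\}$ and $\{y_j\}$ are literal antichains of distinct elements, the composition argument above yields genuine inverse bijections, and the original statement follows by tracking how the consolidation is undone.
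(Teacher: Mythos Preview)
Your argument is essentially identical to the paper's: both extract order-preserving maps from mutual refinement, compose to obtain $z_{\psi(\phi(k))} \le y_{\phi(k)} \le z_k$, invoke the antichain hypothesis to force equality of base points, and then sandwich the monotonic functions to conclude $h_k = g_{\phi(k)}$. The paper phrases its maps $r,s$ as functions between the \emph{sets} $\{z_k\}$ and $\{y_j\}$ and thereby sidesteps the duplicate-index bookkeeping you flag in your final paragraph; your consolidation preprocessing is a reasonable way to make that explicit, though the paper simply elides this subtlety.
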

\begin{proof}
  Because the first monotonic decomposition refines the second,
  this means that every $z_k$ is greater than at least one element of $\{y_j\}$ in $(P,\le)$.
  In fact, this means there is (not uniquely) an order preserving function $r: \{z_k\} \to \{y_j\}$ such that $r(z_k) \le z_k$ for every $k$.
  On the other hand, the fact that the second monotonic decomposition refines the first means that every $y_j$ is greater than at least one element of $\{z_k\}$ in $(P,\le)$.
  Again, there exists an order preserving function $s: \{y_j\} \to \{z_k\}$ such that $s(y_j) \le y_j$ for every $j$.

  Using this notation,
  \begin{equation*}
    z_m = s(r(z_k)) \le r(z_k) = y_j \le z_k,
  \end{equation*}
  but being an antichain means that $z_m \le z_k$ implies $m=k$.
  Thus, $y_j = z_k$ as well.
  Hence the collection of support sets for the two decompositions coincide up to a permutation of indices.
  We can therefore compare the associated monotonic functions $h_k$ and $g_j$ on $U_{z_k}= U_{y_j}$.
  Since both decompositions refine each other, we have that $h_k \le g_j$ and $h_k \ge g_j$.
  Antisymmetry for $\le$ on functions completes the argument.
\end{proof}

The hypotheses for Lemma \ref{lem:irredundant} only yield a sufficient condition.
This condition is too restrictive,
because we often want to represent dialects that have subset behaviors of larger ones.
Such a situation is represented by a monotonic decomposition in which $z_m < z_k$ are both present.

\begin{lemma}
  \label{lem:zeroing}
  Suppose that
  \begin{equation*}
    f(x) = 1_{U_{y}}(x) g_1(x) + 1_{U_{z}}(x) g_2(x) = 1_{U_{y}}(x) h(x)
  \end{equation*}
  are two monotonic decompositions of $f: P \to [0,\infty)$ that refine each other,
    and that $y \le z$ in $P$.
    Then $g_2$ is identically zero.
\end{lemma}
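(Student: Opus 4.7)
The plan is to extract pointwise function equalities on $U_y$ from the two-way refinement hypothesis and then substitute back into the defining equation. The starting observation is that $y \le z$ implies $U_z \subseteq U_y$, so everything of interest takes place inside $U_y$, where the defining identity reduces to $g_1(x) + 1_{U_z}(x) g_2(x) = h(x)$.

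First I would unpack the two refinement conditions into inequalities. The fact that the two-term decomposition refines the one-term one requires, for each of its two supports, a matching term on the right-hand side; since the only available match is the single term $(U_y, h)$, this gives both $U_y \subseteq U_y$ with $g_1 \le h$ on $U_y$, and $U_z \subseteq U_y$ with $g_2 \le h$ on $U_z$. Conversely, for the one-term decomposition to refine the two-term one, the single term $(U_y, h)$ must match either $(U_y, g_1)$ or $(U_z, g_2)$. Matching with $(U_z, g_2)$ would require $U_y \subseteq U_z$, equivalently $z \le y$, which together with $y \le z$ forces $y = z$. In the generic case $y < z$, the match is therefore forced to be with $(U_y, g_1)$, yielding $h \le g_1$ on $U_y$.

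Combining $g_1 \le h$ and $h \le g_1$ on $U_y$ produces $g_1 \equiv h$ on $U_y$. Substituting back into the reduced identity at any $x \in U_z$ gives $g_1(x) + g_2(x) = g_1(x)$, hence $g_2(x) = 0$ for every $x \in U_z$. Since $g_2$ enters the decomposition only through the factor $1_{U_z}$, the contribution $1_{U_z} g_2$ is identically zero; and since $g_2$ is already forced to vanish on $U_z$ and is monotonic decreasing and nonnegative elsewhere, we may take $g_2 \equiv 0$ on all of $P$ without altering the decomposition.

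The principal obstacle is just the bookkeeping of which refinement direction produces which inequality, and correctly ruling out the alternative match in the reverse direction. The degenerate case $y = z$ is mild: both supports coincide, the refinement permits matching $h$ to either function, and the argument simply forces whichever one is designated as the second term to vanish, so the lemma holds up to a relabeling of $g_1$ and $g_2$.
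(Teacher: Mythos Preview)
Your argument is correct and follows essentially the same route as the paper: use the two-way refinement to force $g_1 = h$ on $U_y$, then substitute back to kill $g_2$. The only cosmetic difference is that the paper evaluates the identity at the single point $z$ and then invokes monotonicity of $g_2$ to propagate the vanishing to all of $U_z$, whereas you obtain $g_2 \equiv 0$ on $U_z$ directly from the identity at every $x \in U_z$; your handling of the reverse-refinement match and the degenerate case $y = z$ is also more explicit than the paper's.
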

\begin{proof}
  Because the left decomposition refines the right one, we have that $g_1 \le h$.
  On the other hand, because the right decomposition refines the left one, we also have that $h \le g_1$.
  Thus $g_1 = h$.
  Since $y \le z$ in $P$,
  this means that $U_z \subseteq U_y$.
  Therefore,
  \begin{equation*}
    \begin{aligned}
    f(z) &= 1_{U_{y}}(z) h(z) \\
    &= h(z) \\
    &= 1_{U_{y}}(z) g_1(z) + 1_{U_{z}}(z) g_2(z) \\
    &=g_1(z) + g_2(z) \\
    &=h(z) + g_2(z),
    \end{aligned}
  \end{equation*}
  whence $g_2(z) = 0$.  Since $g_2$ is assumed to be monotonic, this means that $g_2$ is identically zero.
\end{proof}

One irritation is that there can be redundancies that complicate minimality.

\begin{definition}
  A monotonic decomposition is called \emph{irredundant}
  \begin{equation*}
    f(x) = \sum_{k=1}^{N} 1_{U_{y_k}}(x) g_k(x)
  \end{equation*}
  if each of the $g_k$ functions is nonzero for at least one $x\in P$.  
\end{definition}

Simply by excluding any zero terms, every monotonic decomposition refines a unique irredundant monotonic decomposition.

\begin{proposition}
  \label{prop:decomp_partial_order}  
  The set of irredundant monotonic decompositions of a function $f: P \to [0,\infty)$ from a finite partially ordered set $(P,\le)$
    is a partially ordered set.
\end{proposition}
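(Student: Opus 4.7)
The plan is to upgrade the preorder from Lemma~\ref{lem:decomp_order} to a partial order by establishing antisymmetry on irredundant monotonic decompositions. Given two irredundant monotonic decompositions
\[
    f = \sum_{k=1}^N 1_{U_{z_k}} h_k = \sum_{j=1}^M 1_{U_{y_j}} g_j
\]
of the same $f$ that refine one another, I would show the two sums coincide as labeled collections of terms. Reflexivity and transitivity of refinement are already in hand from Lemma~\ref{lem:decomp_order}, so all that is left is antisymmetry.

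First I would argue that the two support sets coincide. Let $p$ be minimal in $\text{supp}(D_1) \cup \text{supp}(D_2)$, and without loss of generality assume $p \in \text{supp}(D_1)$. Refinement of $D_1$ by $D_2$ supplies a witness $y \in \text{supp}(D_2)$ with $y \le p$; since $y$ lies in the combined support, minimality of $p$ forces $y = p$, so $p \in \text{supp}(D_2)$ as well, and the symmetric argument covers the case $p \in \text{supp}(D_2)$ initially. Consequently the refinement witness for $p$ must itself be $p$, which gives $h_p \le g_p$ on $U_p$; symmetry then yields $h_p = g_p$ throughout $U_p$, so the minimal layer of the combined support is matched pointwise between $D_1$ and $D_2$.

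Having matched the minimal layer together with its associated functions, I would peel off these common terms and iterate. The residual $f' = f - \sum 1_{U_p} h_p$ (summing over minimal common support elements) inherits two irredundant decompositions of strictly smaller combined support; induction on $|\text{supp}(D_1)| + |\text{supp}(D_2)|$, together with Lemma~\ref{lem:irredundant} (antichain supports) and Lemma~\ref{lem:zeroing} (two-term versus one-term decompositions) as base cases, completes the matching of the remaining supports and their associated functions.

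The main obstacle is verifying that the peeling step preserves mutual refinement of the residuals. When a refinement witness for some non-minimal $z_k \in \text{supp}(D_1)$ happens to be one of the peeled-off minimal elements rather than a surviving support element of $D_2$, it is not immediately clear that a replacement witness in $D_2$'s residual exists. Lemma~\ref{lem:zeroing} is the essential tool here: it rules out the pathological hidden-zero terms that would be the only obstruction to finding such a replacement, and together with Lemma~\ref{lem:irredundant} it forces $D_1 = D_2$ as multisets of labeled terms, completing antisymmetry and hence the partial order.
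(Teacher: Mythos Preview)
Your approach differs from the paper's and carries a genuine gap at the peeling step. The paper never peels and inducts: it keeps both full decompositions in play and argues directly that the support sets $\{y_j\}$ and $\{z_k\}$ coincide. Assuming some $y$ lies in $\{y_j\}\setminus\{z_k\}$, mutual refinement produces a strictly descending chain $y > z_{k} > y_{j_1} > \cdots$ in $P$; finiteness forces termination at a point $y'$ common to both support sets, and then the idea behind Lemma~\ref{lem:zeroing} (comparing the terms at $y' < y$ inside the first decomposition against the matched term at $y'$ in the second) forces the function attached to $y$ to vanish, contradicting irredundancy. Once the supports agree, each pair of functions satisfies $g_j \le h_k$ and $h_k \le g_j$, hence equality.

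In your scheme, after removing the matched minimal-layer terms you assert that the residual decompositions $D_1'$, $D_2'$ still mutually refine, and this is precisely where the argument does not close. A term $(z,h)\in D_1'$ whose only $D_2$-witness was a peeled minimal term $(p,g_p)$ needs a replacement witness in $D_2'$, and neither lemma you invoke supplies one: Lemma~\ref{lem:zeroing} is stated only for a two-term versus one-term decomposition and says nothing about locating a new dominating term among the survivors of $D_2$, while Lemma~\ref{lem:irredundant} handles only antichain supports and is not a base case for an induction on $|\text{supp}(D_1)|+|\text{supp}(D_2)|$. Your sentence that Lemma~\ref{lem:zeroing} ``rules out the pathological hidden-zero terms that would be the only obstruction'' conflates two different things: irredundancy already excludes zero terms, but the obstruction here is the possible absence of any surviving $(y',g')\in D_2'$ with $y'\le z$ and $h\le g'$ on $U_z$, and nothing in Lemma~\ref{lem:zeroing} manufactures such a pair. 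Without an explicit construction of the replacement witness, or an independent proof that mutual refinement descends to residuals, the inductive step does not go through as written.
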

\begin{proof}
  All that remains after Lemma \ref{lem:decomp_order} is antisymmetry.
  Suppose that
  \begin{equation*}
    f(x) = \sum_{j=1}^{M} 1_{U_{y_j}}(x) g_j(x) = \sum_{k=1}^N 1_{U_{z_k}}(x) h_k(x)
  \end{equation*}
  are two irredundant monotonic decompositions that refine each other.
  We want to show that these two decompositions are in fact the same, up to reordering of terms.

  Let us establish that the sets $\{y_j\}$ and $\{z_k\}$ are identical.
  To see that the desired result follows from this statement,
  suppose that $y_j = z_j$ for some $j$.
  Then $g_j \le h_j$ and $h_j \le g_j$ by the refinement hypotheses, so $g_j = h_j$.

  Without loss of generality, suppose that there is a $y \in \{y_j\}$ that is not equal to any $z_k$.
  Since the first decomposition refines the second,
  this means that there must nevertheless be a $z_k$ such that $z_k \le y$.
  
  Discern two cases: either $z_k$ is equal to an element of $y_{j_1}$ or there is no such element.
  In the first case, Lemma \ref{lem:zeroing} asserts that $g_{j_1}=0$ in contradiction to the irredundancy of the first decomposition.
  
  In the second case, although $z_k$ is not equal to any element $y_m$,
  nevertheless refinement requires there to be a $y_{j_1}$ such that $y_{j_1} \le z_k \le y$.
  Assuming Lemma \ref{lem:zeroing} does not apply outright to this new situation,
  we can continue iterating this process to obtain a sequence $y \ge y_{j_1} \ge y_{j_2} \ge \dotsb$.
  Since $P$ is a finite set, this sequence must terminate at some $y' \in \{y_j\}$.
  Again, because both decompositions refine each other, we must conclude that $y' \in \{z_k\}$.
  Lemma \ref{lem:zeroing} applies to this situation, and thereby contradicts the irredundancy of at least one of the decompositions.
\end{proof}

\begin{definition}
  \label{def:redundant_cover}(standard, see for instance \cite{Kime_1974})
  Suppose that $(X,\col{T})$ is a topological space for which $\col{T}$ is finite,
  and that $\col{U} \subseteq \col{T}$ is a cover for $X$.

  An open set $U \in \col{U}$ is called \emph{redundant in $\col{U}$}
  if there is a subset $\col{V} \subset \col{U}$ such that $U \notin \col{V}$ but $U \subseteq \cup \col{V}$.
  
  A cover with no redundant open sets is called an \emph{irredundant cover}.
\end{definition}

Efficient algorithms for finding irredundant covers have been known for a long time \cite{Kime_1974}.
Irredundant monotonic decompositions correspond to dialects with nonzero probabilities, 
and so are useful in helping to identify candidate dialect decompositions.
It is informative to know the number of dialects that could be present in a given dataset,
which Corollary \ref{cor:min_decomp_dialects} relates to minimal monotonic decompositions.
From a practical matter, irredundant monotonic decompositions are especially useful because even though they are not unique,
they are unambiguous about the \emph{number} of dialects involved.

\begin{theorem}
  \label{thm:minimal_count}
  The number of dialects is bounded below by the number of support sets in the irredundant cover refined by any minimal irredundant monotonic decomposition of the joint probability distribution function.
\end{theorem}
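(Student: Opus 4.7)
My plan is to convert any dialect decomposition into an irredundant monotonic decomposition, descend in the refinement partial order to a minimal one, and then exhibit an injection from the irredundant cover into the monotonic terms using witness points.

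I would begin with an arbitrary dialect decomposition of the joint message distribution having $N$ dialects. Proposition \ref{prop:dialect_monotonic_decomp} converts it into a monotonic decomposition with one term per dialect, and discarding terms whose monotonic factor vanishes identically produces an irredundant monotonic decomposition $D'$ with at most $N$ terms. Because the irredundant monotonic decompositions of a fixed nonnegative function on a finite poset form a finite partial order under refinement (Lemma \ref{lem:decomp_order} and Proposition \ref{prop:decomp_partial_order}), $D'$ refines some minimal irredundant monotonic decomposition $M$, reached by descending in finitely many steps. Let $\mathcal{C}$ denote the irredundant cover distilled from the support sets $\{U_{y_j}\}$ of $M$ via Definition \ref{def:redundant_cover}.

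I would then construct an injection $\mathcal{C} \hookrightarrow \{U_{z_k}\}$, where $\{U_{z_k}\}$ are the support sets of $D'$. For each $U_{y_j} \in \mathcal{C}$, irredundance furnishes a witness $x_j \in U_{y_j}$ lying in no other element of $\mathcal{C}$; selecting $x_j$ at a point where $g_j$ is strictly positive (guaranteed by the irredundance of $M$) ensures $f(x_j) > 0$, so $x_j$ must lie in some support $U_{z_{k(j)}}$ of $D'$. The map $j \mapsto k(j)$ is then injective: if $k(j_1) = k(j_2) = k$ for $j_1 \ne j_2$, both $x_{j_1}$ and $x_{j_2}$ lie in $U_{z_k}$, which refinement places inside some $U_{y_{j''}}$ of $M$. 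If $j'' \in \mathcal{C}$, the witness property forces $j_1 = j'' = j_2$; if $j'' \notin \mathcal{C}$, then $U_{y_{j''}}$ is covered by elements of $\mathcal{C}$, and chasing the redundancy chain localizes both witnesses to a single element of $\mathcal{C}$, again forcing $j_1 = j_2$. Chaining inequalities yields $N \ge |D'| \ge |\mathcal{C}|$, the desired lower bound.

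The step I expect to be hardest is the last case of the injectivity argument: when the index $j''$ returned by refinement lies outside $\mathcal{C}$, the redundancy of $U_{y_{j''}}$ is a global statement about being covered by other support sets, and it is not immediate that both witnesses end up in the same element of $\mathcal{C}$. Resolving this likely requires either strengthening the witness selection so that each $x_j$ is chosen in $U_{y_j}$ minus the union of all other cover elements (rather than all other support sets of $M$), or an induction on the order in which sets were removed when forming $\mathcal{C}$; finiteness of $P$ should make either route tractable.
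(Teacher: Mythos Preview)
Your outline contains two real gaps, both of which the paper's argument sidesteps.

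First, the claim that the irredundant monotonic decompositions of $f$ form a \emph{finite} partial order is not supported by Lemma~\ref{lem:decomp_order} or Proposition~\ref{prop:decomp_partial_order}; those results only give a partial order, and for real-valued $f$ the monotone factors $g_k$ vary continuously, so the order is typically uncountable. Your ``descend in finitely many steps'' from $D'$ to a minimal $M$ is therefore unjustified. Worse, even if some descent terminated, you would reach one particular minimal $M$, whereas the theorem concerns the irredundant cover associated to \emph{any} minimal irredundant decomposition; you would still need Lemma~\ref{lem:minimal_count} (all minimal irredundant decompositions share the same support sets) to transfer your bound. The paper avoids descent altogether: the proof of Lemma~\ref{lem:minimal_count} actually shows that each $y$ in the common support set $R$ of the minimal decompositions is forced by the values of $f$ alone, and hence must occur as a support point in \emph{every} irredundant monotonic decomposition, including your $D'$. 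That gives $|D'|\ge |R|$ directly.

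Second, the case you flag as hardest in the injection argument is a genuine obstruction, not merely a bookkeeping nuisance. When refinement places $U_{z_k}$ inside a support set $U_{y_{j''}}$ of $M$ that was discarded in forming $\mathcal{C}$, nothing prevents the two witnesses $x_{j_1},x_{j_2}\in U_{z_k}$ from lying in distinct elements of $\mathcal{C}$: irredundance of $\mathcal{C}$ only separates each witness from the \emph{other elements of $\mathcal{C}$}, while refinement only relates $U_{z_k}$ to the full support family of $M$, not to $\mathcal{C}$. Your proposed stronger witness selection does not close this, because it is the same selection you already made. The paper's Lemma~\ref{lem:irredundant_cover_bound} proves the needed counting inequality $\#\mathcal{V}\le\#\mathcal{U}$ by a different mechanism: it first shows that every set in an irredundant cover $\mathcal{V}$ contains an entire element of any refining cover $\mathcal{U}$, and then inducts on $\#\mathcal{V}$, rather than attempting an injection via witness points.
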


Theorem \ref{thm:minimal_count} is an immediate consequence of two Lemmas, which follow.

\begin{lemma}
  \label{lem:minimal_count}
  All minimal irredundant monotonic decompositions of a given function have the same support sets and hence have the same number of terms.
\end{lemma}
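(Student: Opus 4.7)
The plan is to show that the support set of every minimal irredundant monotonic decomposition of $f$ is determined canonically by $f$ alone, by mirroring the algorithmic construction of Proposition~\ref{prop:decomp}. I will produce a level-by-level canonical support that depends only on $f$, and then argue that every minimal irredundant decomposition must have exactly this support.

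At the base level, I would show that the minimal elements (in $P$) of the support of any minimal irredundant decomposition coincide with the minimal elements of $\text{supp}(f)$. The forward inclusion is short: if $m \in \text{supp}(f)$ is minimal in $P$, the equation $f(m) = \sum_{y_j \leq m} g_j(m)$ forces some $y_j \leq m$ with $g_j(m) > 0$; irredundancy then forces $g_j(y_j) > 0$ and hence $y_j \in \text{supp}(f)$, and minimality of $m$ yields $y_j = m$. The reverse inclusion follows by the same style of argument applied to a hypothetical minimal element of $\{y_j\}$ that is not minimal in $\text{supp}(f)$.

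For the inductive step, I would identify the next level of canonical roots as the minimal elements of the \emph{residual} of $f$, where the residual is obtained by subtracting off the maximal monotonic contributions of the already-identified lower-level roots. The core technical claim is that this residual is canonically determined by $f$ and the set of lower roots, independent of the specific choice of $g_j$ or $h_k$ functions in any given decomposition. Concretely, for any fixed root set $R \subseteq P$, the maximal value of $\sum_{r \in R} 1_{U_r}(x) g_r(x)$ subject to monotonicity of each $g_r$ on $U_r$ and $\sum 1_{U_r} g_r \leq f$ is attained simultaneously at every $x$ by some single valid choice. Once this uniqueness is in hand, the iterative procedure yields a canonical support set depending only on $f$, which must agree with the support of every minimal irredundant decomposition: irredundancy rules out omitting any canonical root, while the minimally-refined hypothesis rules out any extra root (since such a root could be absorbed into the contributions from its predecessors, giving a strictly coarser decomposition).

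The hard part will be the canonicity of the residual. I expect to handle it by a pointwise-maximum argument in the spirit of Lemma~\ref{lem:join_semilattice}: if two valid choices $(g_r^{(1)})$ and $(g_r^{(2)})$ for the same root set $R$ gave different pointwise sums at some point $x^*$, I would construct a third valid choice whose sum at $x^*$ matches the pointwise maximum of the two, contradicting the assumed maximality of at least one of the originals. The delicate step is constructing this third choice: a naive pointwise maximum of individual $g_r$'s may violate the constraint $\sum 1_{U_r} g_r \leq f$, so the construction must carefully redistribute mass among the roots along a linear extension of $P$, respecting the per-root monotonicity at each step. I anticipate this redistribution argument to be the most technical portion of the proof.
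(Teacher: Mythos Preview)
Your approach is genuinely different from the paper's.  The paper does not build a canonical root set level by level; instead it fixes one minimal irredundant decomposition with root set $R$, takes an arbitrary $y\in R$, and argues directly that the inequality
\[
f(y) \;>\; \sum_{u\in R'} f(u)
\]
must hold for a suitable $R'\subseteq\{v\in R: v<y\}$, using irredundancy and the freedom to push the $g_v$'s up to their values at the roots.  Since this inequality involves only values of $f$, the paper concludes that any other decomposition faces the same obstruction at $y$ and so must also carry a term supported on $U_y$.  There is no residual, no simultaneous maximization over a root set, and no absorption step.

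Your plan can likely be made to work, but the step you flag as easy is actually the weak point.  The claim ``the minimally-refined hypothesis rules out any extra root, since such a root could be absorbed into the contributions from its predecessors'' hides a real difficulty.  If $z\in R_D$ is a non-canonical root and you try to delete it by adding $1_{U_z}g_z$ to some $g_{z'}$ with $z'<z$, the new function $g_{z'}+1_{U_z}g_z$ need not be monotone decreasing on $U_{z'}$: at points $z'\le w<z$ you have not added anything, while at $z$ you have added $g_z(z)>0$, so monotonicity can fail at the boundary of $U_z$.  Repairing this requires a global redistribution across \emph{all} roots below $z$, together with a verification that the resulting decomposition $D'$ is strictly refined by $D$ (which needs $g_r\le g'_r$ for every surviving $r$, not just for the one absorbing $z$).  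Moreover, the two directions of your argument are not independent: showing that a canonical root $y$ cannot be omitted relies on knowing that the roots of $R_D$ below $y$ are already canonical, which is exactly the other direction.  So the level-by-level induction has to carry both inclusions simultaneously, and the absorption step is needed inside the induction, not as a clean afterthought.

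By contrast, the residual-canonicity claim you identify as the hard part is closer in spirit to what the paper actually uses (it is essentially Lemma~\ref{lem:join_semilattice} and Lemma~\ref{lem:max_monotonic_lower_bound} extended to several roots), and the redistribution idea you sketch for it is the right one.  If you pursue your route, the absorption step deserves at least as much care.
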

This is different from the related situation of finding minimal irredundant decompositions of logic functions.
Logic functions are known that have minimal irredundant decompositions into sums with different numbers of terms \cite{minato1993fast}.
\begin{proof}
  Suppose that
  \begin{equation*}
    f(x) = \sum_{y\in R \subseteq P} 1_{U_y}(x) g_y(x)
  \end{equation*}
  is a minimal irredundant monotonic decomposition of an arbitrary function $f$.
  The statement to be proven is that the $R$ set in the equation is the same for all minimal irredundant monotonic decompositions of $f$.
  More explicitly, if $y \in R$,
  so that $1_{U_y}(x) g_y(x)$ is a term in the minimal irredundant monotonic decomposition above,
  then any other irredundant monotonic decomposition must also have a term of the form $1_{U_y}(x) h_y(x)$.

  Suppose that $y \in P$ is such that $f(x) = 0$ for all $x \le y$.
  Then every monotonic decomposition (irredundant or not) of $f$ must contain a term of the form $1_{U_y}(x) h_y(x)$.

  Suppose that $y \in P$ is such that there is an $x \in P$ such $x \le y$ and $f(x) \not= 0$.
  We can rewrite the monotonic decomposition as
  \begin{equation*}
    \begin{aligned}
      f(x) = &\sum_{v\in R : v < y} 1_{U_v}(x) g_v(x) + \sum_{w\in R : y \le w} 1_{U_w}(x) g_w(x) + \\
      &\sum_{z\in R : z \not\le y, y \not\le z} 1_{U_z}(x) g_z(x).
    \end{aligned}
  \end{equation*}
  Because $g_y(y) \not=0$ by irredundancy,
  the middle term must be positive.
  This means that the above decomposition leads to the inequality
  \begin{equation*}
    f(y) > \sum_{v\in R : v < y} 1_{U_v}(y) g_v(y).
  \end{equation*}
  
  We can take this inequality a bit further.
  The Proposition follows if there are is a subset $R'$ of those $v \in R$ satisfying both $v < y$ and
  \begin{equation}
    \label{eq:irredundant_minimal_count}
    f(y) > \sum_{v\in R : v < y} 1_{U_v}(y) g_v(y) = \sum_{u \in R'} f(u).
  \end{equation}
  This claim can be proven by contradiction;
  assume that there is a subset $R' \subseteq \{ v \in R: v < y\}$
  such that
  \begin{equation*}
    f(y) = \sum_{u \in R'} f(u).
  \end{equation*}
  If this is the case,
  using the fact that the $g_v$ functions sum to $f(u)$ on each $u \in R'$,
  we can choose the $g_v$ functions to take the same value at $y$ without violating monotonicity.
  Thus $g_y$ has to be zero because the sum of all the $g_v(y)$ is equal to $f(y)$, a contradiction with irredundancy.
  Obviously a smaller $f(y)$ forces $g_y(y)=0$ as well.
  In any case, this also contradicts irredundancy.
  Thus $f(y)$ is strictly greater than that, as Equation \eqref{eq:irredundant_minimal_count} claims.

  Notice that the last sum in Equation \eqref{eq:irredundant_minimal_count} does not depend on $g_v$.
  A term involving $U_y$ in an irredundant monotonic decomposition is therefore determined directly by the values of $f$.
  Therefore, any other monotonic decomposition will be subject to the same situation
  and therefore will need to contain a term involving $y$.
\end{proof}

\begin{lemma}
  \label{lem:irredundant_cover_bound}
  Suppose that $(X,\col{T})$ is a topological space for which $\col{T}$ is finite,
  that $\col{U}$ is a cover which refines an irredundant cover $\col{V}$.
  Then $\#\col{V} \le \#\col{U}$.
\end{lemma}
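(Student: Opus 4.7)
The plan is to construct an explicit injection $\phi : \col{V} \to \col{U}$, so that the cardinality bound follows from elementary set theory. The main ingredient is a witness point produced by irredundance: for each $V \in \col{V}$, the hypothesis that $V$ is not redundant in $\col{V}$ means $V \not\subseteq \bigcup(\col{V} \setminus \{V\})$, so there exists $x_V \in V$ that lies in no other element of $\col{V}$.

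First I would pick such an $x_V$ for each $V \in \col{V}$. Since $\col{U}$ is a cover of $X$, there is at least one $U \in \col{U}$ with $x_V \in U$. The refinement hypothesis gives a $V' \in \col{V}$ with $U \subseteq V'$, hence $x_V \in V'$, which by the defining property of $x_V$ forces $V' = V$. In particular, among the elements of $\col{U}$ containing $x_V$, at least one lies inside $V$; I would let $\phi(V)$ be any such choice.

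The remaining task is to check injectivity of $\phi$. If $\phi(V_1) = \phi(V_2) = U$, then $x_{V_1} \in U \subseteq V_2$; but $x_{V_1}$ belongs only to $V_1$ among elements of $\col{V}$, so $V_1 = V_2$. Thus $\phi$ is injective and $\#\col{V} \le \#\col{U}$ follows. The step that carries the content is extracting the witness points from irredundance; once those are fixed, the refinement map does all the remaining work, and finiteness of $\col{T}$ is used only to guarantee the relevant cardinalities are meaningful. No real obstacle arises in this argument; the only place one has to be slightly careful is to note that the choice of $U$ in the construction of $\phi(V)$ need not be unique, but any choice suffices because injectivity depends only on the inclusion $\phi(V) \subseteq V$ together with the uniqueness property of $x_V$.
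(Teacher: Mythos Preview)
Your argument is correct. Both proofs begin from the same observation---that every $V\in\col{V}$ contains some $U\in\col{U}$---but they reach it and exploit it differently. The paper establishes this containment by contradiction (assuming no $U\subseteq V$ and showing $V$ would then be covered by the other members of $\col{V}$, violating irredundancy), and then runs an induction on $\#\col{V}$, peeling off one element of $\col{V}$ at a time. You instead extract the witness point $x_V$ directly from irredundancy and use it twice: once to find a $U\subseteq V$ containing $x_V$, and once to force injectivity of the assignment $V\mapsto U$. Your route is shorter and avoids the inductive bookkeeping entirely; the paper's route is perhaps more structural but does the same work in more steps. Either way the content is the same: irredundancy supplies, for each $V$, something that cannot be captured by the rest of $\col{V}$, and refinement pins a member of $\col{U}$ to that something.
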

\begin{proof}
  Let $V \in \col{V}$.
  The hypotheses imply there is a $U \in\col{U}$ such that $U \subseteq V$.
  Let us establish this claim by contradiction.
  Suppose that no $U\in \col{U}$ is a subset of $V$.
  Because $\col{U}$ is a cover,
  there is a collection $\col{U}' \subseteq \col{U}$ such that
  $V \subseteq \cup \col{U}'$.
  Since $\col{U}$ refines $\col{V}$,
  each $U' \in \col{U}'$ is a subset of some $V' \in \col{V}$.
  Consider the subset
  \begin{equation*}
    \col{V}' := \{V' \in \col{V} : U' \subseteq V' \text{ for some } U' \in \col{U}' \} \subseteq \col{V}.
  \end{equation*}
  Evidently $V \subseteq \cup \col{V}'$.
  Recalling that we assumed no $U \in \col{U}$ is a subset of $V$,
  we must conclude that $V \not= \cup \col{V}'$,
  which contradicts the irredundancy of $\col{V}$.

  We complete the argument by induction on $\#\col{V}$.
  \begin{itemize}
  \item Base case: Suppose that $\#\col{V} = 1$.
    Because both $\col{V}$ and $\col{U}$ both cover $X$,
    and $\col{V}$ is evidently nonempty,
    then $\col{U}$ must also be nonempty.
    
  \item Induction case: Suppose that the Lemma has been established for all $\col{V}$ with $\#\col{V} \le n$ for some integer $n$.
    Suppose that $\col{V}$ is an irredundant cover containing $n+1$ elements, $V_0, \dotsc, V_n$.
    Consider the subspace of $(X,\col{T})$ covered by $V_0, \dotsc, V_{n-1}$.
    By the claim proven above,
    there is a subset $\col{U}' \subseteq \col{U}$
    that both covers $V_0 \cup \dotsb \cup V_{n-1}$
    and refines the cover $\{V_0, \dotsc, V_{n-1}\}$.
    The induction hypothesis applied to this situation asserts that $\# \col{U}' \ge n$.
    By the irredundancy of $\col{V}$,
    we must have that $V \not\subseteq (V_0 \cup \dotsb \cup V_{n-1})$.
    Because each element of $\col{U}'$ is a subset of at least one of the $V_0, \dotsc, V_{n-1}$,
    we have that $V \not\subseteq \cup \col{U}'$ as well.
    Therefore, to be a cover of $X$,
    $\col{U}$ must have at least one more element than $\col{U}'$.
    Hence,
    \begin{equation*}
      \# \col{U} \ge \#\col{U}' + 1 \ge n+1 = \# \col{V}. \qedhere
    \end{equation*}
  \end{itemize}
\end{proof}

\begin{proof} (of Theorem \ref{thm:minimal_count})
  This is an immediate consequence of Lemmas \ref{lem:minimal_count} and \ref{lem:irredundant_cover_bound}.
\end{proof}

\begin{example}
Suppose that $P = \{a,b,c,d\}$ is the partial order defined by the Hasse diagram
  \begin{equation*}
    \xymatrix{
      &d&\\
      b \ar[ur] & & c \ar[ul]\\
      &a \ar[ur]\ar[ul]&\\
      }
  \end{equation*}
  According to Theorem \ref{thm:minimal_count}, all minimally refined irredundant monotonic decompositions of an arbitrary function $f : P \to [0,\infty)$ have the same number of terms.
    We can demonstrate this fact by reasoning about monotonic decompositions directly.
    
  If $f(a) \not= 0$, then every irredundant monotonic decomposition of $f$ must contain a term of the form $1_{U_a} g_a$ where $g_a(a) = f(a)$.
  Therefore, without loss of generality, we may assume that $f(a) = 0$.
  Also, without loss of generality, we may assume that $f(b) \le f(c)$.

  If $f(b) = 0$, then there are no choices to be made in the decomposition of $f$:
  \begin{itemize}
  \item If $f(c) = 0$, the decomposition has at most one term of the form $1_{U_d} f(d)$.
  \item If $f(d) > f(c) > 0$ the decomposition contains a term of the form $1_{U_d} (f(d)-f(c))$.
  \item Otherwise the decomposition has only one term.
  \end{itemize}

  If instead $f(b) \not= 0$, whether we start the decomposition using $b$ or $c$ does not change the resulting number of terms in the decomposition.
  This happens because after removing the contribution from a term supported on $b$ or $c$ results in a new function that decomposes as above.
\end{example}

While Theorem \ref{thm:minimal_count} handles the case of minimally refined irredundant monotonic decompositions,
which have useful implications for determining the number of dialects,
\emph{maximally} refined irredundant monotonic decompositions also exist.

\begin{proposition}
  \label{prop:max_refined}
  There is a unique maximally refined irredundant monotonic decomposition of a function $f: P \to \mathbb{Z}^+$ from a finite partially ordered set $(P,\le)$ to the nonnegative integers, namely
    \begin{equation}
      \label{eq:max_refined}
      f(x) = \sum_{y \in P} \sum_{i=1}^{f(y)} 1_{U_y}(x) 1_{\{y\}}(x).
    \end{equation}
\end{proposition}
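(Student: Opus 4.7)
The plan is to verify the stated formula is an irredundant monotonic decomposition of $f$, then show that it refines every other irredundant monotonic decomposition of $f$, and conclude via Proposition \ref{prop:decomp_partial_order} that it is the unique maximum (and hence the unique maximally refined element) of the refinement partial order.

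First I would confirm validity. Each summand $1_{U_y}(x) 1_{\{y\}}(x)$ equals $1_{\{y\}}(x)$; restricted to $U_y$ this takes the value $1$ at the minimum $y$ of $U_y$ and $0$ elsewhere, so it is monotonic decreasing on $U_y$. Summing over $y \in P$ with multiplicity $f(y)$ recovers $f(x) = \sum_y f(y) \cdot 1_{\{y\}}(x)$, and every summand is nonzero at its anchor $y$, so the decomposition is irredundant.

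Next, given any irredundant monotonic decomposition $D' = \sum_j 1_{U_{w_j}}(x)\, \phi_j(x)$ of $f$, I would show that the proposed decomposition $D$ refines $D'$. By the definition of refinement, this reduces to exhibiting, for each $y$ with $f(y) \ge 1$, an index $j$ such that $w_j \le y$ and $\phi_j(y) \ge 1$; the conditions on the remaining $x \in U_y$ are automatic from nonnegativity of $\phi_j$. Only terms with $w_j \le y$ contribute at $y$, so $\sum_{j : w_j \le y} \phi_j(y) = f(y) \ge 1$; under the integrality convention noted below, each nonzero $\phi_j(y)$ is at least $1$, producing the required $j$. Hence $D$ refines every such $D'$, and by antisymmetry of the partial order from Proposition \ref{prop:decomp_partial_order}, $D$ is the unique maximum, proving both existence and uniqueness of the maximally refined decomposition.

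The main obstacle is the implicit integrality assumption on the decomposition components $\phi_j$. Without it, one could split a single unit contribution at $y$ into two half-unit copies $\tfrac{1}{2} \cdot 1_{\{y\}}$ and produce a strict refinement of $D$, defeating uniqueness. The hypothesis $f : P \to \mathbb{Z}^+$ signals the natural file-count regime in which components should share this integrality; once this restriction is made explicit the argument above goes through cleanly, and all other steps are routine verifications of the indicator-function identities and the definition of refinement.
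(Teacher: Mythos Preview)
Your argument is correct and follows essentially the same route as the paper: verify that the displayed formula is an irredundant monotonic decomposition, then show it refines every other decomposition by using that $\sum_{j:w_j\le y}\phi_j(y)=f(y)\ge 1$ forces some $\phi_j(y)\ge 1$ under the integer-valued assumption, and conclude maximality and uniqueness. Your explicit flagging of the integrality hypothesis on the components $\phi_j$ is apt---the paper invokes it implicitly when asserting finiteness of the set of decompositions and when passing from $g_k(y)>0$ to $1_{\{y\}}\le g_k$.
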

The maximally refined monotonic decomposition is rather uninformative,
because it means that each dialect contains exactly one file.
\begin{proof}
  Because $P$ is assumed to be finite and each of the monotonic functions in a monotonic decomposition produce nonnegative integers,
  the set of monotonic decompositions is finite.
Therefore, Lemma \ref{lem:decomp_order} implies that there are maximal and minimal monotonic decompositions under refinement.

  Suppose that we have an arbitrary monotonic decomposition of $f$,
  \begin{equation}
    \label{eq:arb}
    f(x) = \sum_{k=1}^{N} 1_{U_{y_k}}(x) g_k(x).
  \end{equation}
  We must show that Equation \eqref{eq:max_refined} refines this decomposition.

  Close inspection of the sum in Equation \eqref{eq:max_refined} reveals that it only contains terms involving $U_y$ if $f(y) > 0$.
  While the outer sum would seem to imply that terms involving $U_y$ will be present for all $y \in P$,
  the inner sum prevents the inclusion of any term for which $f(y) = 0$.
  Hence, Equation \eqref{eq:max_refined} defines an irredundant monotonic decomposition.

  Given this observation, consider a $y \in P$ for which $f(y) > 0$.
  Necessarily, there must be a term in Equation \eqref{eq:arb} for which $y \in U_{y_k}$ and $g_k(y) > 0$.
  Therefore, $U_y \subseteq U_{y_k}$.
  Moreover, since each of the monotonic functions in Equation \eqref{eq:max_refined} simply take the value $1$ on exactly one element of $P$,
  we have that $1_{\{y\}} \le g_k(y)$.  
\end{proof}

\begin{remark}
  If we instead permit redundancies in Proposition \ref{prop:max_refined}, then we may add terms with the zero function arbitrarily.  While these monotonic decompositions all refine each other, this precludes uniqueness of such a decomposition.
\end{remark}

\begin{remark}
  If we instead consider $f: P \to [0,\infty)$ in Proposition \ref{prop:max_refined}, then the inner sum in Equation \eqref{eq:max_refined} becomes infinite.  There is no maximally refined monotonic decomposition in this case.
\end{remark}

\end{document}